\newtheorem{theorem}{Theorem}
\newtheorem{lemma}{Lemma}
\newtheorem{defn}{Definition}
\newtheorem{col}{Corollary}
\newtheorem{proposition}{Proposition}
\newcommand{\cp}[1]{\mathbb{C}\mathrm{P}^{#1}}
\newcommand{\R}[1]{\mathbb{R}^{#1}}
\newenvironment{proof}{\par\noindent{\it Proof.}}{\hfill$\square$\par\noindent}
\begin{document}


\title{Nash states versus eigenstates for many-body quantum systems}

\author{Chuqiao Lin}
\affiliation{Rudolf Peierls Centre for Theoretical Physics, University of Oxford, Oxford OX1 3PU, United Kingdom}

\author{Vir B. Bulchandani}
\affiliation{Institut f{\"u}r Theoretische Physik, Leibniz Universit{\"a}t Hannover, Appelstra{\ss}e 2, 30167 Hannover, Germany}

\author{S. L. Sondhi}
\affiliation{Rudolf Peierls Centre for Theoretical Physics, University of Oxford, Oxford OX1 3PU, United Kingdom}

\date{\today}
\begin{abstract}
Eigenstates of observables such as the Hamiltonian play a central role in quantum mechanics. Inspired by the pure Nash equilibria that arise in classical game theory, we propose ``Nash states'' of multiple observables as a generalization of eigenstates of single observables. This generalization is mathematically natural for many-body quantum systems, which possess an intrinsic tensor product structure. Every set of observables gives rise to algebraic varieties of Nash state vectors that we call ``Nash varieties''. We present analytical and numerical results on the existence of Nash states and on the geometry of Nash varieties. We relate these ideas to earlier, pioneering work on the Nash equilibria of few-body quantum games and discuss connections to the variational minimization of local Hamiltonians.
\end{abstract}
\maketitle
\tableofcontents
\section{Introduction}

The notion of Nash equilibrium~\cite{nash_equilibrium_1950} provides an appealingly concise formulation of equilibrium under conditions of competition in economics. A simplified mathematical statement of this idea for smooth functions of continuous variables is as follows: we consider $M$ distinct functions $h_i(\vec{x})$ of  $M$ real variables $\vec{x}=(x_1,x_2,\ldots,x_M) \in \mathbb{R}^M$ and require their partial but simultaneous extremization, yielding the $M$ equations
\begin{equation}
\label{eq:locmin}
    {\partial h_i(\vec{x}) \over \partial x_i} = 0,
\end{equation}
whose solution $\vec{x} = \vec{x}_{\rm Nash}$ is a Nash equilibrium.

In economics, the functions $h_i$ might represent the profit earned by firm $i$ as a function of the prices $\vec{x}$ of all the firms in a competitive market including its own price $x_i$. Then the Nash equilibrium conditions Eq. \eqref{eq:locmin} yield a simple model for economically rational decision-making within this market: each firm varies its prices attempting to maximize its profits, all the while assuming that its competitors will keep their prices fixed. In the more abstract language of game theory, $x_i$ defines the ``strategy'' adopted by player $i$ in pursuit of maximizing their ``payoff function'' $h_i$.

Nash's equilibrium is to be contrasted with the standard notion of equilibrium in physics, for example that of energy or free energy, which is instead defined as an extremum of a single function $H(\vec{x})$ of $M$ real variables obtained by solving the $M$ equations 
\begin{equation}
\label{eq:globmin}
    {\partial H(\vec{x}) \over \partial x_i} = 0
\end{equation}
for each $i$. Their solution $\vec{x} = \vec{x}_{\rm Eq}$ yields the equilibrium that is familiar to physicists.

Our choice of notation will suggest to many readers the identification $H(\vec{x}) = \sum_{i=1}^M h_i(\vec{x})$ and while in general there is no reason to make such an identification---we are describing two separate problems---in some cases such an identification can be natural and interesting. In the context of economics, firms might collude in order to maximize their total collective profit. In physics, total energies generally take the form of sums of few-body terms. Nevertheless, optimizing each such term separately in the ``Nash'' sense appears to be a new mathematical idea in that setting, whose quantum formulation will be the main focus of this paper (and is related to the so-called ``joint numerical range''~\cite{szymanski2018classification,xu2024bounding} of the few-body terms).


Thus one way to motivate our work is that it ``quantizes'' the passage from Eq. \eqref{eq:globmin} to Eq. \eqref{eq:locmin}. For Eq. \eqref{eq:globmin}, quantization corresponds to replacing optimization of a function $H$ over a set of real arguments by optimization of the expectation value $\langle \psi | \hat{H} | \psi \rangle$ of an observable $\hat{H}$ over a suitable Hilbert space of pure states $|\psi\rangle$. This is just the familiar variational principle used to find eigenstates and eigenvalues of observables. For the Nash-equilibrium-like conditions Eq. \eqref{eq:locmin}, we similarly propose replacing functions $h_i$ with expectation values of observables $\langle \psi | \hat{h}_i|\psi\rangle$ and replacing optimization over classical variables with optimization over subsets of the quantum degrees of freedom, in a manner that we define precisely in the next section. This leads naturally to a new class of states that we call the ``Nash states'' of a set of observables $\{\hat{h}_i\}$, which generalize the standard notion of eigenstates of a single observable $\hat{H}$. Eigenvalues of the latter similarly generalize to sets of expectation values $\{\langle \psi | \hat{h}_i| \psi \rangle \}$ within a given Nash state $|\psi\rangle$.

This definition raises some immediate questions. First, do Nash states even exist? If so, what are the properties of the set of all Nash states? Second, where in quantum many-body physics do Nash states naturally arise? 

This paper is mostly concerned with answering the first set of questions. We observe that the set of all Nash state vectors corresponding to a given set of operators and allowed quantum strategies forms a real algebraic variety (the ``Nash variety'') whose generic dimension can be estimated by counting constraints. In contrast to eigenstates, which simply yield isolated points in the complex projective space of states, we provide numerical evidence that Nash varieties can generate topologically non-trivial structures in this space. We then exhibit various exceptional cases in which eigenstates of the ``associated Hamiltonian'' $\hat{H} = \sum_{i=1}^M \hat{h}_i$ define Nash states of the set of observables $\{\hat{h}_i\}_{i=1}^M$. While this is certainly not true in general, it is nevertheless true for classes of Hamiltonians for which determining the ground-state energy is a computationally hard task.

On the second question, we make contact with the existing literature on non-cooperative quantum games and show that a subset of Nash states coincides with the game-theoretic Nash equilibria of a ``multi-observable game'' that we define in this paper. Our definition both encompasses the canonical examples of non-cooperative quantum games~\cite{meyer_quantum_1999,eisert_quantum_1999,benjamin_comment_2001,benjamin_multiplayer_2001,kolokoltsov_quantum_2019,johnson_quantum_2002} and greatly expands the set of such games with known Nash equilibria. 
For readers familiar with the earlier literature on non-cooperative quantum games, we emphasize that this progress comes from the viewpoint taken in our work; rather than starting from a shared state and then asking whether its orbit under the space of strategies contains a Nash equilibrium, we effectively ``zoom out'' and consider the space of all possible orbits, which allows us to directly construct the Nash variety for a given set of payoff functions. Some other distinctive features of our formulation compared to previous work on this topic are that it leads naturally to Nash equilibria that are pure states, rather than mixed states, and that it yields ``scalable''~\cite{brassard2005quantum} games that can be extended to arbitrarily many players. We also discuss connections between Nash states and other related ideas that have appeared in the literature in the context of variational quantum algorithms~\cite{Anshu_2021} and so-called ``local minima of Hamiltonians''~\cite{chen2023local}.  

The paper is structured as follows. We first provide a precise definition of both pure and mixed Nash states, and a specialization of this definition that is physically natural for quantum many-body systems. We then discuss the geometry of spaces of Nash states in broad terms before exhibiting an analytically tractable example of a Nash variety that can be visualized in three dimensions and hints at the richness of the general case. We next present some widely studied and physically significant families of Hamiltonians for which eigenstates can be viewed as Nash states, thereby establishing a connection between finding Nash states and solving the local Hamiltonian problem. 

We then turn to the motivating realization of Nash states in the context of non-cooperative quantum games. We explicitly construct Nash varieties for the canonical example of the Quantum Prisoner's Dilemma~\cite{eisert_quantum_1999,benjamin_comment_2001}, yielding geometrical objects that neatly encapsulate an infinite family of pure Nash equilibria for this game. Finally we discuss some applications of approximate, rather than exact, Nash states within the contexts of variational quantum algorithms and the computational complexity theory of quantum games. We conclude with an outline of promising future directions.

\section{From eigenstates to Nash states}
\label{sec:introNashstates}
\subsection{The variational view of eigenstates} \label{sec:var_prob}

Eigenstates of the Hamiltonian $\hat{H}$ in quantum mechanics can be described using the ``variational principle'': eigenstates are stationary points of the function
\begin{equation}
E(|\psi\rangle) = \langle \psi | \hat{H} | \psi \rangle, \quad |\psi \rangle \in \mathcal{H}',
\end{equation}
where the space of states $\mathcal{H}' \cong \cp{d-1}$ comprises unit norm vectors chosen from each ray of the underlying Hilbert space of state vectors $\mathcal{H} \cong \mathbb{C}^d$. We will assume throughout this paper that the Hilbert space dimension $d$ is finite. By definition, ground states of $\hat{H}$ are global minima of $E(|\psi\rangle)$.

There is an alternative and equivalent formulation of the variational principle in terms of Lie group actions on the space of states: fix a reference state $|\psi_0\rangle \in \mathcal{H'}$ and consider the function
\begin{equation}
F(\hat{U}) = \langle \psi_0| \hat{U}^\dagger \hat{H} \hat{U} |\psi_0 \rangle, \quad \hat{U} \in SU(d),
\end{equation}
on $SU(d)$. Then eigenstates $|\psi\rangle = \hat{U}^* |\psi_0\rangle$ of $\hat{H}$ correspond to stationary points $\hat{U}^*$ of $F(\hat{U})$ and ground states to its minima. We can make these elementary observations precise as follows:
\begin{proposition}
    \label{prop:equiv_eigvec_and_variation}
    A state $\ket{\psi}$ is an eigenstate of the Hamiltonian $\hat H$ if and only if
    \begin{equation}
        \label{eq:localstationary}
        \braket{\psi | [\hat H, \hat{A}] | \psi} = 0
    \end{equation}
    for all $\hat{A} \in \mathfrak{su}(d)$. The eigenstate $|\psi\rangle$ is a ground state of $\hat{H}$ if in addition
    \begin{equation}
    \langle \psi | \hat{U}^\dagger \hat{H} \hat{U} | \psi \rangle \geq \langle \psi |\hat{H} | \psi \rangle
    \end{equation}
    for all $\hat{U} \in SU(d)$.
\end{proposition}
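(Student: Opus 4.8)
The plan is to prove the eigenvalue equivalence in both directions and then obtain the ground-state criterion from transitivity of the group action. For the forward direction, I would assume $\hat{H}\ket{\psi} = \lambda\ket{\psi}$ and substitute directly into $\braket{\psi | [\hat H, \hat{A}] | \psi}$. Using that $\hat{H}$ is Hermitian and $\lambda$ real, so that $\bra{\psi}\hat{H} = \lambda\bra{\psi}$, both terms $\braket{\psi | \hat{H}\hat{A} | \psi}$ and $\braket{\psi | \hat{A}\hat{H} | \psi}$ collapse to $\lambda\braket{\psi | \hat{A} | \psi}$ and cancel, for every $\hat{A}$ and in particular for all $\hat{A}\in\mathfrak{su}(d)$.

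The converse is the crux and the step I expect to be the main obstacle: I must show that vanishing of the commutator expectation value against all of $\mathfrak{su}(d)$ forces $\hat{H}\ket{\psi}$ to be parallel to $\ket{\psi}$. The idea is to probe the condition with a carefully chosen family of generators. Setting $E_\psi = \braket{\psi | \hat{H} | \psi}$, the vector $(\hat{H}-E_\psi)\ket{\psi}$ is automatically orthogonal to $\ket{\psi}$, so it suffices to show its overlap with every $\ket{\chi}$ orthogonal to $\ket{\psi}$ vanishes. For such $\ket{\chi}$, the rank-two operators $\hat{A} = \ket{\chi}\bra{\psi} - \ket{\psi}\bra{\chi}$ and $\hat{A}' = i(\ket{\chi}\bra{\psi} + \ket{\psi}\bra{\chi})$ are traceless and anti-Hermitian, hence genuinely lie in $\mathfrak{su}(d)$; restricting $\ket{\chi}$ to the orthogonal complement of $\ket{\psi}$ is precisely what guarantees tracelessness, so no subtraction of an identity piece is needed. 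Evaluating the stationarity condition on $\hat{A}$ and $\hat{A}'$ and using $\braket{\psi | \psi} = 1$ isolates respectively the real and imaginary parts of the overlap $\braket{\chi | \hat{H} | \psi}$, forcing $\braket{\chi | \hat{H} | \psi} = 0$ for all $\ket{\chi}\perp\ket{\psi}$. Combined with the automatic orthogonality, this yields $(\hat{H}-E_\psi)\ket{\psi} = 0$, so $\ket{\psi}$ is an eigenstate with eigenvalue $E_\psi$.

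For the ground-state criterion I would invoke transitivity of the $SU(d)$ action on rays: any unit vector can be mapped to any other by a unitary, whose determinant can be set to one by an overall phase that leaves the ray, and hence the expectation value $\braket{\cdot | \hat{H} | \cdot}$, unchanged. Thus the orbit $\{\hat{U}\ket{\psi} : \hat{U}\in SU(d)\}$ sweeps out all of $\mathcal{H}' \cong \cp{d-1}$, and the stated inequality says exactly that $\braket{\psi | \hat{H} | \psi}$ is the global minimum of $E$ over $\mathcal{H}'$. Since $\ket{\psi}$ is already an eigenstate, its eigenvalue coincides with this minimum and it is therefore a ground state; the converse, which the proposition does not require, follows at once from the Rayleigh--Ritz bound. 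I anticipate that the eigenstate converse is the only genuinely non-routine part of the argument, with everything else reducing to direct substitution and the elementary transitivity statement.
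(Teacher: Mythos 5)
Your proof is correct and complete. The paper does not actually write out a proof of this proposition---it remarks only that ``its proof is standard''---and your argument (direct substitution for the forward direction; the traceless anti-Hermitian probes $\ket{\chi}\bra{\psi}-\ket{\psi}\bra{\chi}$ and $\mathrm{i}\left(\ket{\chi}\bra{\psi}+\ket{\psi}\bra{\chi}\right)$ extracting the real and imaginary parts of $\braket{\chi|\hat H|\psi}$ for the converse; transitivity of $SU(d)$ on rays plus Rayleigh--Ritz for the ground-state clause) is precisely the standard argument being invoked, with the one genuinely delicate point---that orthogonality of $\ket{\chi}$ to $\ket{\psi}$ guarantees the generators are traceless and hence lie in $\mathfrak{su}(d)$ rather than merely $\mathfrak{u}(d)$---handled correctly.
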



Although its proof is standard, this statement of the variational principle in terms of group actions usefully reformulates the problem of finding eigenstates of a given Hamiltonian in terms of finding states whose energy is stationary with respect to a given group of unitary operations. The latter problem is more general than the former; for example one can consider subgroups of the full unitary group $G \leq SU(d)$, or even stationarity of the expectation values of \emph{multiple} observables, as we now discuss.

\subsection{A variational principle for multiple observables}

Let us now consider generalizing the standard variational principle to the case of multiple observables. For this purpose, our formulation of the variational principle in terms of Lie group actions as in Proposition \ref{prop:equiv_eigvec_and_variation} provides a natural starting point. 

Thus suppose that $M$ observables $\hat{h}_i$ act on the same Hilbert space $\mathcal{H} \cong \mathbb{C}^d$ and that for each of these observables, there is a corresponding Lie group $G_i \leq SU(d)$ of unitary operators that also act on $\mathcal{H}$. We will assume that the unitaries corresponding to distinct observables commute, i.e. that $[G_i,G_j]=0$ whenever $i\neq j$.

It will be useful to let $\frak{g}_i$ denote the Lie algebra of $G_i$. Then we define a ``Nash state'' as follows:

\begin{defn}
\label{def:Nash_state}
    A state $|\psi\rangle$ is a \emph{Nash state} of a set of observables $\{\hat{h}_i\}_{i=1}^M$ and corresponding pairwise commuting Lie groups $\{G_i\}_{i=1}^M$ of unitary operators if it satisfies
    \begin{equation}
    \label{eq:Nashstate}
    \langle \psi | [\hat{h}_i,\hat{A}_i] |\psi\rangle = 0, \quad i=1,2,\ldots,M,
    \end{equation}
    for all $\hat{A}_i \in \mathfrak{g}_i$. 
\end{defn}

As discussed in the introduction, we will also be concerned with optimality of Nash states. Depending on the context, it is natural to consider either local or global extrema, and our focus in this paper will largely be minimization rather than maximization of the $\hat{h}_i$. This yields the following definitions:
\begin{defn}
\label{def:loc_Nash_eqstate}
    A Nash state $|\psi\rangle$ of $\{\hat{h}_i\}_{i=1}^M$ and $\{G_i\}_{i=1}^M$ is a \emph{local Nash minimum state}
    if the $M$ bilinear forms
    \begin{equation}
    \label{eq:Nash_locmin}
    B_i(\hat{A}_i,\hat{A}'_i) = \langle \psi | \frac{1}{2}\{\hat{h}_i,\{\hat{A}_i,\hat{A}'_i\}\} -\hat{A}_{i}\hat{h}_i \hat{A}'_i-\hat{A}'_{i}\hat{h}_i \hat{A}_i| \psi\rangle
    \end{equation}
    are positive semidefinite for $\hat{A}_i,\hat{A_i'} \in \mathfrak{g}_i$ and is a \emph{local Nash maximum state} if all these bilinear forms are negative semidefinite.
\end{defn}
\begin{defn}
\label{def:Nash_eqstate}
    A Nash state $|\psi\rangle$ of $\{\hat{h}_i\}_{i=1}^M$ and $\{G_i\}_{i=1}^M$ is a \emph{Nash minimum state}
    if
    \begin{equation}
    \label{eq:Nash_globmin}
    \langle \psi | \hat{U}_i^\dagger \hat{h}_i \hat{U}_i | \psi \rangle \geq  \langle \psi | \hat{h}_i | \psi \rangle, \quad i=1,2,\ldots,M,
    \end{equation}
    for all $\hat{U}_i \in G_i$ and is a \emph{Nash maximum state} if all these inequalities are reversed.
\end{defn}

Letting $\mathcal{M}$ denote the set of density matrices on $\mathcal{H}$, it is clear that Definitions \ref{def:Nash_state}--\ref{def:Nash_eqstate} generalize immediately to density matrices $\hat{\rho} \in \mathcal{M}$ upon replacing the pure-state expectation value $\braket{\psi| [\hat h_i, \hat A_i]| \psi}$ in Eq. \eqref{eq:Nashstate} with a density-matrix expectation value $\operatorname{Tr}(\hat{\rho} [\hat h_i, \hat A_i])$ and similarly for Eqs. \eqref{eq:Nash_locmin} and \eqref{eq:Nash_globmin}. We will primarily be concerned with pure Nash states in this paper but will sometimes refer to Nash density matrices or mixed Nash states in the sense just defined. Note also that in either setting, Definition \ref{def:Nash_eqstate} is stronger than Definition \ref{def:loc_Nash_eqstate}, which in turn is stronger than Definition \ref{def:Nash_state}. 

The idea of simultaneously extremizing the expectation values of a set of multiple observables $\mathcal{S} = \{\hat{h}_i\}_{i=1}^M$ is reminiscent of the ``joint numerical range'' or ``convex support'' of $\mathcal{S}$, which can be defined as the set of tuples of expectation values~\cite{szymanski2018classification,xu2024bounding}
\begin{equation}
J(\mathcal{S}) = \{ (\mathrm{tr}[\hat{\rho}\hat{h}_1], \mathrm{tr}[\hat{\rho}\hat{h}_2],\ldots,\mathrm{tr}[\hat{\rho}\hat{h}_M]) : \hat{\rho} \in \mathcal{M}\} .
\end{equation}
Thus every Nash state in the sense of Definitions \ref{def:Nash_state}-\ref{def:Nash_eqstate} naturally gives rise to a point in the joint numerical range $J(\mathcal{S}) \subset \mathbb{R}^M$. However, for all cases of interest in this paper, the Lie groups $G_i < SU(d)$ are proper subgroups of $SU(d)$ and the corresponding Nash states will not in general correspond to extremal points of the joint numerical range. We will also be concerned with Nash states themselves rather than their images in the joint numerical range (although the latter might be interesting to consider in their own right).

We will frequently specialize to the following case of the above definitions that is directly motivated by the physics of many-body quantum systems. Thus we assume that the observables $\{\hat{h}_i\}_{i=1}^M$ in Definitions \ref{def:Nash_state}--\ref{def:Nash_eqstate} are $k$-local operators~\cite{kitaev2002classical} acting on a system of $N$ qubits. (Note that this includes spatially local operators as a special case, and that the generalization to qudits is straightforward.) We further suppose that the $N$ qubits under consideration are partitioned into $M$ pairwise disjoint blocks of $q$ qubits, that we label with the index $i=1,2,\ldots,M$, such that the support\footnote{Recall that the support $\mathcal{S}$ of an operator $\hat{O}$ on $N$ qubits is the smallest subset of qubits $\mathcal{S} \subset [N]$ such that $\hat{O}$ acts trivially on all the qubits outside $\mathcal{S}$, i.e. the smallest $\mathcal{S}$ such that $\hat{O} = \hat{O}_{\mathcal{S}} \otimes \hat{\mathbbm{1}}_{[N]\backslash \mathcal{S}}$ with $\hat{O}_{\mathcal{S}} \neq \hat{\mathbbm{1}}_{\mathcal{S}}$.} of $\hat{h}_i$ includes block $i$. Thus $N = qM$ for integers $q$ and $M$ with $q \geq k$. Finally, we assign to each $\hat{h}_i$ the Lie group of unitary operations acting strictly on block $i$, so that $G_i \cong SU(2^q)$ and $[G_i,G_j] = 0$ for $i \neq j$. We will refer to this physically motivated special case as the ``local case'' of Definitions \ref{def:Nash_state}--\ref{def:Nash_eqstate}, with a specification of the integers $N$, $1 < M \leq N$, and $k \leq q = \mathcal{O}(N^0)$ always implicit.

We note that when the support of each operator $\hat{h}_i$ coincides exactly with block $i$, the local case of Definition \ref{def:Nash_state} becomes trivial: the corresponding Nash states are simply $M$-fold tensor products of eigenstates of $\hat{h}_i$ restricted to each block. However, if the supports of the operators $\hat{h}_i$ are allowed to exceed block $i$, so that $[\hat{h}_i,\hat{h}_{j}]\neq 0$ for blocks $j$ contiguous with block $i$, the problem becomes non-trivial and even the existence of such Nash states is \textit{a priori} unclear.

Finally, the motivation for our terminology of ``Nash states'' is that every pure (resp. mixed) Nash maximum state in the sense of Definition \ref{def:Nash_eqstate} is in one-to-one correspondence with a pure (resp. mixed) Nash equilibrium in the game theoretic sense. We show this explicitly in Lemma \ref{lemma:NashEq}.

\section{Spaces of Nash states}
\label{sec:geom_nash_variety}
We now explore some general features of spaces of Nash states. Throughout this section, $\mathcal{H} \cong \mathbb{C}^d$ will denote the Hilbert space of pure state vectors, $\mathcal{H}' \cong \cp{d-1}$ the complex projective space of pure states and $\mathcal{M} \subset \mathrm{End}[\mathcal{H}]$ the set of density matrices associated with $\mathcal{H}$. We will follow the convention that the dimension $\mathrm{dim}(A)$ of a real or complex manifold or vector space $A$ always refers to its real dimension. 

\subsection{The convex set of Nash density matrices}
We first consider the set $D \subset \mathcal{M}$ of density matrices that are Nash states. In general, $D$ includes density matrices corresponding to both pure and mixed Nash states. The most basic question concerning $D$ is whether it is non-empty. One can show that $D$ is indeed non-empty either by appealing to topology-based arguments for the existence of mixed Nash equilibria for quantum games~\cite{nash_equilibrium_1950,glicksberg_further_1952,meyer_quantum_1999}, or simply by noting that the maximally mixed state $\hat{\rho}_*= \frac{1}{d} \hat{\mathbbm{1}}$ is always a mixed Nash state. Regarding the global structure of $D$, it follows by the density-matrix version of Definition \ref{def:Nash_state} that $D$ is a convex set, and is therefore always contractible to a point and topologically trivial. 

We can characterize the generic dimension of the set $D$ in more detail as follows. Recall that the bulk real dimension of the space of density matrices is $\mathrm{dim}(\mathcal{M}) = d^2-1$. Meanwhile, for each generator of $G_i$, the density-matrix version of Eq. \eqref{eq:Nashstate} can be viewed as a linear equation in the real and imaginary components of the density matrix. Since there are $\dim(\mathfrak{g}_i)$ such constraints for each $i$, there will be $\sum_{i=1}^M \dim(\mathfrak{g}_i)$ constraints in total. For generic choices of $\hat{h}_i$ and $G_i$, these conditions will be linearly independent and we deduce that the set of mixed Nash states is generically the intersection of $\mathcal{M}$ with $\sum_{i=1}^M \dim(\mathfrak{g}_i)$ linearly independent hyperplanes. Thus 
\begin{equation}
\mathrm{dim}(D) = d^2 - 1 - \sum_{i=1}^M \dim(\mathfrak{g}_i).
\end{equation}

For the local case of Definition \ref{def:Nash_state} on $N$ qubits partitioned into disjoint subsystems of $q=\mathcal{O}(N^0)$ qubits, this becomes
\begin{equation} \label{eq:mixeddimcount}
\mathrm{dim}(D) = 2^{2N} - 1 - (N/q)(2^{2q}-1),
\end{equation}
which is dominated by the exponential in $N$ for large $N$.

\subsection{The algebraic variety of Nash state vectors}

We now turn to the set of pure Nash states, which has a richer structure than the set of Nash density matrices into which it embeds. It will be useful to write $V \subset \mathcal{H}$ for the set of pure Nash state vectors and $V' \subset \mathcal{H}'$ for the complex projectivization of $V$, i.e. the set of pure Nash states. The conditions Eq. \eqref{eq:Nashstate} can be viewed as defining real, homogeneous quadric surfaces in the real and imaginary components of $|\psi\rangle$, implying that $V$ always has the structure of a real algebraic variety. We refer to $V$ as the ``Nash variety''. Its projectivization $V'$ is well-defined by homogeneity of the conditions Eq. \eqref{eq:Nashstate}, but has no natural interpretation as a real or complex projective variety because the constraints Eq. \eqref{eq:Nashstate} are not holomorphic in the state vector $|\psi\rangle$. We would again like to know whether the spaces $V$ and $V'$ are non-empty, and their real dimension at generic points if so. Unfortunately the standard topological arguments for the existence of mixed Nash states do not extend readily to pure Nash states; nevertheless, we exhibit specific examples in Sections \ref{subsec:visualizing} and \ref{sec:solv_spec_cases} for which the existence of pure Nash states can be demonstrated explicitly. 

In terms of global structure, $V$ is generically contractible to a point by homogeneity of Eq. \eqref{eq:Nashstate} and only becomes topologically nontrivial upon complex projectivization to yield $V'$. Regarding the dimension of $V$, the lack of algebraic closure of the real numbers makes it difficult to determine the dimension of a given real algebraic variety systematically~\cite{bochnak2013real}. However, it is still instructive to apply na{\"i}ve dimension counting to the Nash variety, which predicts the following behaviour at generic points. 

For state vectors, the real dimension of Hilbert space $\mathrm{dim}(\mathcal{H}) = 2d$, and the quadratic constraints Eq. \eqref{eq:Nashstate} arising from independent Lie algebra generators will be functionally independent for generic choices of the $\hat{h}_i$. This implies that the Nash variety will have local dimension
\begin{equation}
\label{eq:puredimcount0}
\mathrm{dim}(V) = 2d - \sum_{i=1}^M \dim(\mathfrak{g}_i)
\end{equation}
at generic points, with $\mathrm{dim}(V') = \mathrm{dim}(V)-2$ upon projectivization. For the local case of Definition \ref{def:Nash_state}, this analysis yields the estimate
\begin{equation}
\label{eq:puredimcount}
\mathrm{dim}(V') = 2(2^N-1) - (N/q)(2^q-1),
\end{equation}
which is again exponentially high-dimensional in $N$, as for the set of Nash density matrices.

\subsection{Visualizing the Nash variety for two qubits}
\label{subsec:visualizing}

\begin{figure*}[hbt!]
  \centering
  \subfigure[]{
    \includegraphics[width=0.45\textwidth]{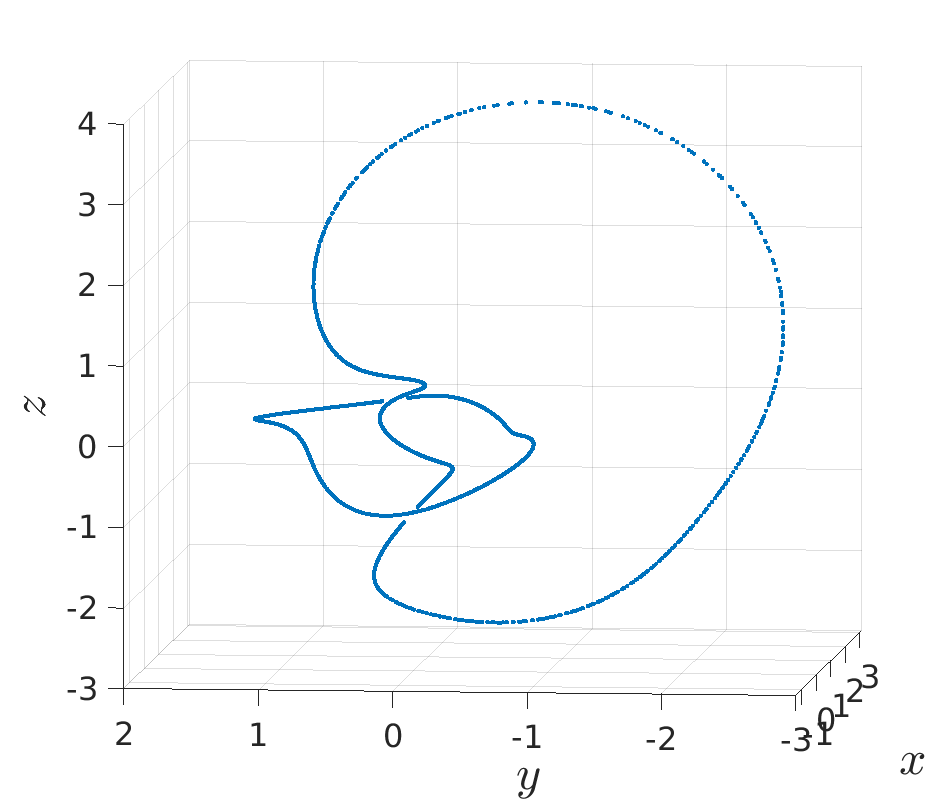}
    \label{fig:linked}
  }
  \subfigure[]{
    \includegraphics[width=0.45\textwidth]{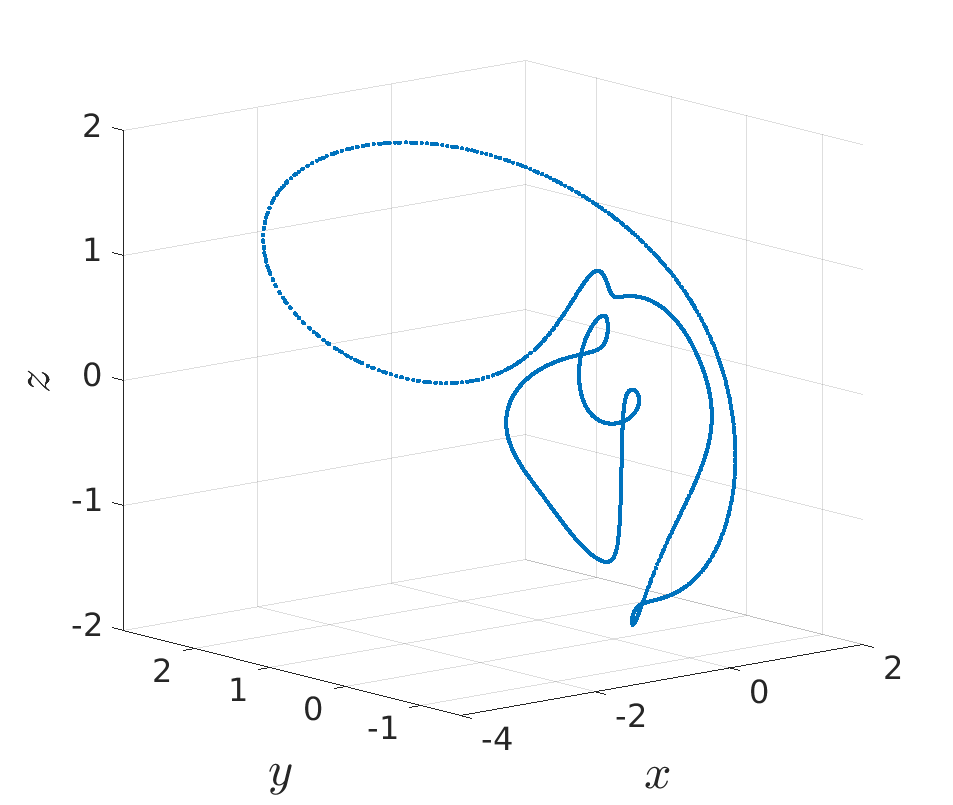}
    \label{fig:notlinked}
  }
  \caption{Illustrative examples of the (double cover of) the set of pure Nash states for $M=2$ real, symmetric observables acting on $N=2$ qubits. We find that this set generically appears to be homeomorphic to two circles and can have non-trivial linking number depending on the realization of the random matrices $\{\hat{h}_1,\hat{h}_2\}$. We obtain this plot by solving for unit normalized $4$-vectors in $S^3$ that lie in the set $\tilde{W}'$ defined in Eq. \ref{eq:defVprime}, which we then stereographically project to $\R{3}$ for ease of visualization.}
  \label{fig:example_variety}
\end{figure*}

We now attempt to validate these generic expectations for the Nash variety by considering low-dimensional examples that are amenable to numerical exploration. To this end, we focus on the local case of Definition \ref{def:Nash_state} consisting of $N$ qubits divided into $N$ single-qubit subsystems, with single-qubit rotations $G_i \cong SU(2)$ acting on each qubit. In this case, Eq. \eqref{eq:puredimcount} reduces to
\begin{equation}
    \label{eq:su2dimcount}
\mathrm{dim}(V') = 2(2^N-1) - 3N.
\end{equation}
Numerical experiments on randomised sets of Hamiltonians are in agreement with Eq. \eqref{eq:su2dimcount}, yielding $\dim (V') = 0$ for $N=2$ and $\dim(V') = 5$ for $N=3$ qubits. In more detail, we use the homotopy continuation method of numerical algebraic geometry (specifically the package \texttt{HomotopyContinuation.jl}\cite{breiding_homotopycontinuationjl_2018}\cite{sampling_bottlenecks2023}) for numerically finding and projectivizing these algebraic varieties, as follows. 
In our analysis, Eq. \eqref{eq:Nashstate} is first interpreted as imposing constraints on the $2d$ real degrees of freedom of the state vector $\ket{\psi}\in \mathbb{C}^d$. The projectivization of $\mathbb{C}^d$ to $\cp{d-1}$ is executed by normalizing and manually setting one real coordinate to zero (equivalent to specifying a local coordinate chart). The raw output of the algorithm\cite{breiding_homotopycontinuationjl_2018} yields real solutions amid complex ones; the latter are unphysical and we correspondingly remove them from the dataset by post-selection. For $N=M=2$, generic samples yield a discrete set of solutions, consistent with dimension $0$, while for $N=M=3$, a discrete set of solutions only appears when intersecting the Nash variety with a linear subspace of codimension $5$. Linearizing about these known solutions then generates a five-dimensional tangent space, from which a five-dimensional coordinate patch locally covering $V'$ can be constructed iteratively by a predictor-corrector method, i.e. moving linearly in the tangent space and subsequently falling back into the manifold using Newton's method. In general, this scheme allows the geometry of $V'$ to be probed locally, but the computation time increases rapidly with $N$ and $M$ and global information about $V'$ is correspondingly difficult to obtain.

Such analytical and numerical explorations suggest that the lowest-dimensional nontrivial $V'$ ($N=2$ qubits) consists of isolated points, while the next lowest dimensional example ($N=3$ qubits) is too high-dimensional to visualize easily. We can circumvent these difficulties and find a two-qubit example for which the set of pure Nash states $V'$ is both nontrivial and possible to visualize by imposing time-reversal symmetry on the observables $\hat{h}_i$, i.e. we demand that the operators $\hat{h}_i$ are real and symmetric as matrices. 

Let us first parameterize state vectors $|\psi\rangle \in \mathbb{C}^4$ by their real and imaginary parts, i.e.
\begin{equation}
    \ket{\psi} = \mathbf{x} + i\mathbf{y}, \quad (\mathbf{x}, \mathbf{y}) \in \mathbb{R}^8.
\end{equation}
For real and symmetric $\hat{h}_i$, the operators $\hat{B}_{iX} = [\hat h_i, \hat{X}_{i}], \,B_{iZ} = [\hat h_i, \hat Z_{i}]$ are real and antisymmetric while the operators $B_{iY} = \mathrm{i}[\hat h_i, \hat Y_i]$ are real and symmetric. This yields a particularly simple and explicit form for the conditions Eq. \eqref{eq:Nashstate}, which reduce to solving six simultaneous quadratic equations over $\mathbb{R}^8$:

\begin{align}
    \mathbf{x}^T B_{iY} \mathbf{x} + \mathbf{y}^T B_{iY} \mathbf{y} &= 0, \quad i=1,2, \label{eq:iy}\\
    \mathbf{x}^T B_{iX} \mathbf{y} &= 0, \quad i=1,2, \label{eq:ix}\\
    \mathbf{x}^T B_{iZ} \mathbf{y} &= 0,\quad i=1,2. \label{eq:iz}
\end{align}

This set of equations defines a real algebraic variety $V \subseteq \mathbb{R}^8$. In addition to the two-dimensional components expected from the generic dimension counting argument Eq. \eqref{eq:puredimcount0}, we argue that $V$ also contains components $\tilde{V} \subset V$ with dimension 3, thanks to time-reversal symmetry of $\hat{h}_i$. We do this constructively, by letting

\begin{equation}
\label{eq:tildeV}
\tilde{V} = \{(\mathbf{x},\lambda \mathbf{x}) \in \mathbb{R}^8 : \mathbf{x} \in \mathbb{R}^4, \lambda \in \mathbb{R}, \, \mathbf{x}^T B_{iY} \mathbf{x}=0\}.
\end{equation}

One can verify directly that $\tilde{V} \subset V$. Indeed, once we make the assumption that $\mathbf{y}$ and $\mathbf{x}$ are parallel as in Eq. \eqref{eq:tildeV}, the constraints in Eqs. \eqref{eq:ix} and \eqref{eq:iz} immediately become redundant. Thus, for generic choices of $\hat{h}_i$, $\tilde{V}$ is the product of the intersection of two independent quadric surfaces in $\mathbb{R}^4$ with the real line and has local dimension three. If we instead allow for linearly independent $\mathbf{x}$ and $\mathbf{y}$, we find that numerically solving Eqs. \eqref{eq:iy}-\eqref{eq:iz} in $\R{8}$ yields two-dimensional components distinct from $\tilde{V}$. We therefore conjecture that generic points of $V$ lie in the three-dimensional component $\tilde{V}$.

This conjecture implies that for generic choices of $\hat{h}_i$, the complex projectivization $V'$ locally has the structure of a 1-manifold. In more detail, it implies that generic points of $V'$ lie in the complex projectivization of $\tilde{V}'$, which is the quotient of $\tilde{V}$ by the equivalence relation
\begin{equation}
\label{eq:cproj}
(\mathbf{x},\mathbf{y}) \sim r (\cos{\theta}\mathbf{x}-\sin{\theta} \mathbf{y}, \sin{\theta} \mathbf{x} + \cos{\theta} \mathbf{y})\end{equation}
for $r>0$ and $\theta \in [0,2\pi)$.
Under this relation, 
\begin{equation}
\label{eq:quot}
(\mathbf{x},\lambda \mathbf{x}) \sim (\mathbf{x},0) \sim (-\mathbf{x},0)
\end{equation}
for all $\lambda \in \mathbb{R}$, from which it follows that a two-to-one parameterization of $\tilde{V}'$ is given by
\begin{equation}
\label{eq:defVprime}
\tilde{W}' = \{(\mathbf{x},0) \in \mathbb{R}^8 : \mathbf{x} \in S^3, \, \mathbf{x}^T B_{iY} \mathbf{x}= 0\},
\end{equation}
where $S^3 = \{\mathbf{x} \in \mathbb{R}^4 : \| \mathbf{x} \| = 1\}$ denotes the unit $3$-sphere in $\mathbb{R}^4$. Identifying the points of $\tilde{W}'$ under the remaining freedom allowed by the equivalence relation Eq. \eqref{eq:quot} finally yields $\tilde{V}' = \tilde{W}'/\{\mathbf{x} \sim -\mathbf{x}\}$. We deduce that $\tilde{V}'$ is locally a 1-manifold.

It is easier to visualize the double cover $\tilde{W}'$ (viewed as a submanifold of $S^3$) than it is to visualize $\tilde{V}'$, and we do this via stereographic projection onto $\R{3}$. Numerical evidence suggests that $\tilde{W}'$ can be topologically non-trivial, as follows. Points of $\tilde{W}'$ are obtained via homotopy continuation. We find that for random choices of the observables $\hat{h}_i$, the Nash variety appears to be homeomorphic to a pair of circles. Whether or not these circles are linked depends on the choice of $\hat{h}_i$. Figure \ref{fig:example_variety} depicts examples of both scenarios for distinct, randomly chosen pairs of real, symmetric $N=2$ qubit observables $\{\hat h_1, \hat h_2\}$. 

\section{Frustration-free and strictly $k$-local Hamiltonians} \label{sec:solv_spec_cases}
We now discuss certain special cases in which Nash states in the sense of Definition \ref{def:Nash_state} reduce either to conventional eigenstates or to local extrema of the associated Hamiltonian $\hat{H}=\sum_{i=1}^M \hat{h}_i$, in the sense of Ref. \onlinecite{Anshu_2021} or Ref. \onlinecite{chen2023local}. Throughout this section we will restrict our attention to $k$-local Hamiltonians on $N$ qubits with $k = \mathcal{O}(N^0)$.

We first consider so-called ``frustration-free Hamiltonians''~\cite{hastings2006solving,perezgarcia2007matrix,Sattath_2016}. Recall that a Hamiltonian $\hat{H}$ is frustration-free if it has the following property~\cite{tasaki2020physics}.
\begin{defn}
\label{def:FF}
A Hamiltonian $\hat{H}$ is \emph{frustration-free} if there exist observables $\{\hat{h}_i\}_{i=1}^M$ with corresponding minimum eigenvalues $\{\epsilon_i\}_{i=1}^M$ such that
\begin{enumerate}
    \item $\hat{H}= \sum_{i=1}^M \hat{h}_i$,
    \item the operators $\hat{h}_i - \epsilon_i\hat{\mathbbm{1}}$ are positive semi-definite for all $i$,
    \item there exists a state $|\psi\rangle$ for which
\begin{equation}
\label{eq:GS}
\hat{h}_i |\psi\rangle = \epsilon_i |\psi\rangle
\end{equation}
for all $i$.
\end{enumerate}
\end{defn}
Families of frustration-free Hamiltonians of interest to condensed matter physicists are usually extensive, so that $M = \Omega(N)$ in Definition \ref{def:FF}. The simplest case of frustration-freeness is commuting terms, $[\hat{h}_i,\hat{h}_j] = 0$ for all $i,\,j$. Definition \ref{def:FF} implies that if $\hat{H}$ is frustration-free, a state satisfying Eq. \eqref{eq:GS} for all $i$ is a ground state of $\hat{H}$, and vice-versa. The following result is immediate upon applying Proposition \ref{prop:equiv_eigvec_and_variation} to each $\hat{h}_i$ in turn.
\begin{proposition}
\label{prop:FF}
Let $\hat{H}$ be frustration-free with the operators $\{\hat{h}_i\}_{i=1}^M$ as in Definition \ref{def:FF}. Then ground states of 
$\hat{H}$ are Nash minimal states of $\{\hat{h}_i\}_{i=1}^M$ with respect to any set of pairwise commuting Lie groups $\{G_i\}_{i=1}^M$.
\end{proposition}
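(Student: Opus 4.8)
The plan is to reduce Proposition~\ref{prop:FF} to Proposition~\ref{prop:equiv_eigvec_and_variation} applied separately to each observable $\hat{h}_i$. The key observation is that for a frustration-free Hamiltonian, a ground state $|\psi\rangle$ of $\hat{H}$ is, by Definition~\ref{def:FF}, simultaneously an eigenstate of \emph{every} term $\hat{h}_i$ with eigenvalue equal to its minimum $\epsilon_i$, i.e. $\hat{h}_i|\psi\rangle = \epsilon_i|\psi\rangle$ for all $i$. The entire statement then follows by invoking the two halves of Proposition~\ref{prop:equiv_eigvec_and_variation} once for each $i$.

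First I would establish the equivalence between ground states of $\hat{H}$ and simultaneous lowest eigenstates of the $\hat{h}_i$; this is essentially the content of the remark preceding the proposition, so I would state it as the starting point. Given such a $|\psi\rangle$, I would then verify the Nash condition Eq.~\eqref{eq:Nashstate}. Because $|\psi\rangle$ is an eigenstate of $\hat{h}_i$, Proposition~\ref{prop:equiv_eigvec_and_variation} gives $\braket{\psi|[\hat{h}_i,\hat{A}]|\psi} = 0$ for \emph{all} $\hat{A}\in\mathfrak{su}(d)$, and in particular for all $\hat{A}_i$ in the subalgebra $\mathfrak{g}_i \subseteq \mathfrak{su}(d)$ of any chosen group $G_i$. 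This holds regardless of which pairwise commuting groups $\{G_i\}$ we select, which is exactly the ``with respect to any set'' clause in the statement.

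The second half is to upgrade ``Nash state'' to ``Nash minimal state'' in the sense of Definition~\ref{def:Nash_eqstate}, which requires the global inequality Eq.~\eqref{eq:Nash_globmin} for each $i$. Here I would use the positive-semidefiniteness hypothesis: since $\hat{h}_i - \epsilon_i\hat{\mathbbm{1}}$ is positive semidefinite, $\epsilon_i$ is the minimum eigenvalue of $\hat{h}_i$, so for any unitary $\hat{U}_i\in G_i$ we have $\braket{\psi|\hat{U}_i^\dagger \hat{h}_i \hat{U}_i|\psi} \geq \epsilon_i = \braket{\psi|\hat{h}_i|\psi}$, the last equality holding because $|\psi\rangle$ is the $\epsilon_i$-eigenstate. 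This is precisely Eq.~\eqref{eq:Nash_globmin}, so $|\psi\rangle$ is a Nash minimal state.

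Honestly, there is no substantial obstacle here: the proposition is essentially a bookkeeping corollary of Proposition~\ref{prop:equiv_eigvec_and_variation} combined with the defining properties of frustration-freeness. The only point requiring a moment's care is confirming that the simultaneous eigenvalue condition Eq.~\eqref{eq:GS} makes $|\psi\rangle$ an eigenstate of each term individually (so that the first half of Proposition~\ref{prop:equiv_eigvec_and_variation} applies term-by-term), and then noting that the minimality condition~(2) of Definition~\ref{def:FF} supplies exactly the global lower bound needed for the second half. I would present the argument as two short applications of the earlier proposition, emphasizing that nothing about the specific choice of $G_i$ enters beyond $\mathfrak{g}_i\subseteq\mathfrak{su}(d)$.
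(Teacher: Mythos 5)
Your proposal is correct and follows essentially the same route as the paper, whose entire proof is the remark that the result ``is immediate upon applying Proposition~\ref{prop:equiv_eigvec_and_variation} to each $\hat{h}_i$ in turn''; you have simply filled in the bookkeeping (ground state $\Leftrightarrow$ simultaneous $\epsilon_i$-eigenstate via positive semidefiniteness, then the eigenstate and ground-state halves of Proposition~\ref{prop:equiv_eigvec_and_variation} restricted from $SU(d)$ to each $G_i$) that the paper leaves implicit.
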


We next consider $k$-local Hamiltonians in the sense of quantum complexity theory~\cite{kitaev2002classical}, a class that is broad enough to encompass most of the lattice models studied in quantum condensed matter physics. As usual~\cite{kitaev2002classical}, we define a Hamiltonian $\hat{H} = \sum_{i=1}^{M'} \hat{h}'_i$ on $N$ qubits to be \emph{k-local} if each $\hat{h}'_i$ acts non-trivially on at most $k$ qubits. We define $\hat{H}$ to be \emph{strictly k-local} if each $\hat{h}'_i$ acts non-trivially on exactly $k$ qubits. This is a more restricted class of Hamiltonians than $k$-local Hamiltonians but nevertheless of physical interest; a textbook example is the $SU(2)$-symmetric Heisenberg model~\cite{bethe1931theorie} which is strictly $k$-local with $k=2$. We then have the following result:
\begin{theorem}
\label{thm:eigenstatethm}
Let $\hat{H}$ be a strictly $k$-local Hamiltonian on $N$ qubits. Let $G_i \cong SU(2)$ denote the group of single-qubit rotations acting on qubit $i$. Then there exist observables $\{\hat{h}_i\}_{i=1}^N$ whose associated Hamiltonian is $\hat{H} = \sum_{i=1}^N \hat{h}_i$, such that
\begin{enumerate}
    \item All eigenstates of 
    $\hat{H}$ are Nash states of $\{\hat{h}_i\}_{i=1}^N$ and $\{G_i\}_{i=1}^N$.
    \item All ground states of 
    $\hat{H}$ are Nash minimum states of $\{\hat{h}_i\}_{i=1}^N$ and $\{G_i\}_{i=1}^N$.
\end{enumerate}
\end{theorem}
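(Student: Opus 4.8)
The plan is to exhibit the decomposition $\hat H = \sum_{i=1}^N \hat h_i$ explicitly by distributing each strictly $k$-local term evenly among the qubits it touches, and then to reduce the single-qubit Nash conditions for the $\hat h_i$ to the stationarity and minimality conditions for $\hat H$ itself. Concretely, I would set
\begin{equation}
\hat h_i = \frac{1}{k} \sum_{m\,:\, i \in \mathrm{supp}(\hat h'_m)} \hat h'_m ,
\end{equation}
one (Hermitian) observable per qubit, where $\hat H = \sum_m \hat h'_m$ is the given strictly $k$-local decomposition. Because every $\hat h'_m$ has support of size exactly $k$, it appears in precisely $k$ of these sums with weight $1/k$, so $\sum_{i=1}^N \hat h_i = \sum_m \hat h'_m = \hat H$, as the theorem requires.

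The crux is then the exact operator identity
\begin{equation}
[\hat H, \hat A_i] = k\,[\hat h_i, \hat A_i], \qquad \hat A_i \in \mathfrak{g}_i ,
\end{equation}
for every single-qubit generator $\hat A_i$ acting on qubit $i$. The point is that the only terms of $\hat H$ failing to commute with $\hat A_i$ are exactly those $\hat h'_m$ containing qubit $i$, and these are precisely the terms collected in $k\hat h_i$; hence $[\hat H,\hat A_i]=\sum_{m:\, i\in\mathrm{supp}(\hat h'_m)}[\hat h'_m,\hat A_i]=k[\hat h_i,\hat A_i]$. Part 1 then follows at once: if $\ket{\psi}$ is an eigenstate of $\hat H$, Proposition \ref{prop:equiv_eigvec_and_variation} gives $\braket{\psi | [\hat H,\hat A_i] | \psi}=0$, and the identity forces $\braket{\psi | [\hat h_i,\hat A_i] | \psi}=0$ for all $\hat A_i\in\mathfrak g_i$, which is exactly the Nash condition Eq.~\eqref{eq:Nashstate}.

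For part 2 I would integrate this infinitesimal identity up to the group level. Conjugating only the terms supported on qubit $i$ yields
\begin{equation}
\hat U_i^\dagger \hat H \hat U_i - \hat H = k\bigl(\hat U_i^\dagger \hat h_i \hat U_i - \hat h_i\bigr), \qquad \hat U_i \in G_i .
\end{equation}
Taking the expectation in a ground state $\ket{\psi}$, the left-hand side is nonnegative because $\ket{\psi}$ globally minimizes $\braket{\psi | \hat H | \psi}$ while $\hat U_i\ket{\psi}$ is a competing normalized state; dividing by $k>0$ gives $\braket{\psi | \hat U_i^\dagger\hat h_i\hat U_i | \psi}\ge\braket{\psi | \hat h_i | \psi}$, which is precisely the Nash minimum condition Eq.~\eqref{eq:Nash_globmin}.

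I expect the main obstacle to be conceptual rather than computational: the whole argument rests on the proportionality constant in $[\hat H,\hat A_i]=k[\hat h_i,\hat A_i]$ being a \emph{single} scalar $k$ independent of the term index $m$. This is exactly where strict $k$-locality is indispensable—if the $\hat h'_m$ had supports of varying sizes, the weights needed to enforce $\sum_i\hat h_i=\hat H$ would differ term by term, $[\hat H,\hat A_i]$ would no longer be a scalar multiple of $[\hat h_i,\hat A_i]$, and the reduction to the eigenstate and ground-state conditions for $\hat H$ would collapse. I would therefore emphasize that it is the \emph{construction}, not merely the proof, that uses the uniform support size, and note the harmless edge case of a qubit absent from every term's support, which simply gives $\hat h_i = 0$.
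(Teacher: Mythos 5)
Your proposal is correct and follows essentially the same route as the paper: the same ``star Hamiltonian'' construction $\hat h_i$ distributing each strictly $k$-local term with weight $1/k$ over its support, the same proportionality $[\hat H,\hat A_i]=k[\hat h_i,\hat A_i]$ reducing part 1 to Proposition \ref{prop:equiv_eigvec_and_variation}, and the same cancellation of $\hat U_i$-invariant terms for part 2 (the paper writes this for $k=2$ and notes the general case is straightforward). Your explicit treatment of general $k$ and of the edge case $\hat h_i=0$ are minor presentational additions, not a different argument.
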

\begin{proof}
For simplicity we consider strictly $2$-local Hamiltonians; the generalization to strictly $k$-local Hamiltonians with $k \neq 2$ is straightforward. Thus let
\begin{equation}
\hat{H} = \sum_{\{i,j\}\in E} \hat{h}_{ij}
\end{equation}
be a strictly 2-local Hamiltonian on an undirected graph $(V,E)$ with vertices or sites $V$ and edges $E$, where the total number of qubits $N = |V|$.

To prove the first statement, let $|\psi\rangle$ be an eigenstate of $\hat{H}$ such that $\hat{H}|\psi\rangle = E_{\psi} |\psi\rangle$. Then for all linear operators $\hat{A}$ we have
\begin{align}
\label{eq:stat}
\langle \psi| [\hat{H},\hat{A}]|\psi\rangle = 0.
\end{align}
Now consider acting at site $i$ with an arbitrary single-qubit rotation $\hat{U}_i(\mathbf{v}_i) = e^{\hat{A}}$, where $\hat{A} = \mathrm{i}\mathbf{v}_i \cdot \hat{\pmb{\sigma}}_i$ and $\sigma_i^\alpha$ denotes the Pauli matrices acting on qubit $i$. Then Eq. \eqref{eq:stat} implies that
\begin{equation}
\label{eq:locstat}
\langle \psi | [\hat{H},\hat{\sigma}_i^\alpha] |\psi \rangle = 2\langle \psi | [\hat{h}_{i},\hat{\sigma}_i^\alpha] |\psi \rangle = 0, \quad \alpha=1,2,3,
\end{equation}
at each site $i$, where the ``star Hamiltonians'' 
\begin{equation}
\label{eq:locH}
\hat{h}_i = \frac{1}{2}\sum_{\{i,j\} \in E} \hat{h}_{ij}, \quad i = 1,2,\ldots,N,
\end{equation}
satisfy $\sum_{i=1}^N \hat{h}_i = \hat{H}$. Thus $|\psi\rangle$ is a Nash state of $\hat{H}$ with respect to the observables $\{\hat{h}_i\}_{i=1}^N$ and single-qubit unitaries acting on each qubit. 

To prove the second statement, let $|\psi\rangle$ be a ground state of $\hat{H}$ and note that by definition, arbitrary single-qubit unitaries $\hat{U}_i$ acting at site $i$ satisfy
\begin{equation}
\langle \psi | \hat{U}_i^\dagger \hat{H} \hat{U}_i |\psi \rangle \geq \langle \psi | \hat{H} |\psi \rangle.
\end{equation}
This implies that
\begin{equation}
\sum_{\{i,j\} \in E} \langle \psi | \hat{U}_i^\dagger \hat{h}_{ij} \hat{U}_i | \psi \rangle \geq \sum_{\{i,j\} \in E} \langle \psi |  \hat{h}_{ij} | \psi \rangle, 
\end{equation}
and the result follows by definition of $\hat{h}_i$.
\end{proof}
Note that Theorem \ref{thm:eigenstatethm} relies crucially on our specific choice of $\hat{h}_i$ in Eq. \eqref{eq:locH}, specifically the equal weight assigned to each edge of the underlying graph. If one relaxes the condition of strict $k$-locality, such a choice may no longer be possible. For example, consider Hamiltonians with both 2- and 1- local contributions,
\begin{equation}
\hat{H} = \sum_{\{i,j\}\in E} \hat{h}_{ij} + \sum_{i \in V} \hat{s}_i.
\end{equation}
In this case, the steps leading to the proof of Theorem \ref{thm:eigenstatethm} imply that $|\psi\rangle$ is a Nash ground state with respect to the star Hamiltonians
\begin{equation}
\hat{h}_i = \frac{1}{2}\sum_{\{i,j\}\in E} \hat{h}_{ij} + \frac{1}{2}\hat{s}_i.
\end{equation}
However, for generic choices of $\hat{s}_i$, there is no choice of coefficients $\alpha_i$ such that
\begin{equation}
\hat{H} = \sum_{i=1}^N \alpha_i \hat{h}_i,
\end{equation}
and these steps do not yield an analogue of Theorem \ref{thm:eigenstatethm}.

Finally, we note that in the special case that $|\psi\rangle$ is a product state and $\hat{H}$ is strictly $k$-local, the reasoning leading to Theorem \ref{thm:eigenstatethm} implies that $|\psi\rangle$ being a Nash state is equivalent to $|\psi\rangle$ being ``locally optimal'' in the sense of Ref. \onlinecite{Anshu_2021}, which is turn a special case of being a ``local minimum under local unitary perturbations'' in the sense of Ref. \onlinecite{chen2023local}. The following stronger statement also holds.
\begin{col}
\label{cor:prod}
Let $\hat{H}$ be strictly $k$-local and let $\{\hat{h}_i\}_{i=1}^N$ and $\{G_i\}_{i=1}^N$ be as in Theorem \ref{thm:eigenstatethm}. Then optimal product-state approximations to the ground state energy of $\hat{H}$, i.e. product states $|\psi^* \rangle = \otimes_{i=1}^N |\phi^*_i\rangle$ such that
\begin{equation}
\langle \psi^* | \hat{H} | \psi^* \rangle = \min_{|\psi \rangle = \otimes_{i=1}^N |\phi_i\rangle} \langle \psi | \hat{H} | \psi \rangle
\end{equation}
are Nash minimum states of $\{\hat{h}_i\}_{i=1}^N$ and $\{G_i\}_{i=1}^N$.
\end{col}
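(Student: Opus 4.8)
The plan is to exploit the one feature that distinguishes the group action here from the full variational problem in Theorem \ref{thm:eigenstatethm}: each $G_i \cong SU(2)$ acts only on qubit $i$ and therefore maps product states to product states. Concretely, if $|\psi^*\rangle = \otimes_{j=1}^N |\phi^*_j\rangle$ and $\hat{U}_i \in G_i$, then $\hat{U}_i|\psi^*\rangle = |\phi^*_1\rangle \otimes \cdots \otimes (\hat{U}_i|\phi^*_i\rangle) \otimes \cdots \otimes |\phi^*_N\rangle$ is again a product state. First I would record this observation and combine it with the defining optimality of $|\psi^*\rangle$ over the class of product states to conclude
\[
\langle \psi^* | \hat{U}_i^\dagger \hat{H}\hat{U}_i | \psi^* \rangle \geq \langle \psi^* | \hat{H} | \psi^* \rangle
\]
for every $\hat{U}_i \in G_i$ and every $i$.

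Next I would transfer this inequality for $\hat{H}$ into the corresponding inequality for the star Hamiltonian $\hat{h}_i$ of Eq. \eqref{eq:locH}. The key algebraic point is that conjugation by $\hat{U}_i$ leaves every edge term $\hat{h}_{jl}$ with $i \notin \{j,l\}$ invariant, so only the edges incident to site $i$ are affected. Since these are exactly the terms appearing in $\hat{h}_i = \tfrac{1}{2}\sum_{\{i,j\}\in E}\hat{h}_{ij}$, but carrying half the weight they have in $\hat{H}$, one obtains the operator identity
\[
\hat{U}_i^\dagger \hat{h}_i \hat{U}_i - \hat{h}_i = \tfrac{1}{2}\left(\hat{U}_i^\dagger \hat{H} \hat{U}_i - \hat{H}\right).
\]
Because the prefactor $\tfrac{1}{2}$ is positive, taking expectation values in $|\psi^*\rangle$ and invoking the previous step yields $\langle \psi^* | \hat{U}_i^\dagger \hat{h}_i \hat{U}_i | \psi^* \rangle \geq \langle \psi^* | \hat{h}_i | \psi^* \rangle$, which is precisely the Nash minimum condition Eq. \eqref{eq:Nash_globmin}. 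The generalization to strictly $k$-local Hamiltonians proceeds as in Theorem \ref{thm:eigenstatethm}, with $\tfrac{1}{2}$ replaced by $\tfrac{1}{k}$.

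Finally I would observe that satisfying Eq. \eqref{eq:Nash_globmin} for all $\hat{U}_i \in G_i$ automatically implies the stationarity condition Eq. \eqref{eq:Nashstate}: writing $\hat{U}_i = e^{t\hat{A}_i}$ with $\hat{A}_i \in \mathfrak{g}_i$ and differentiating $t \mapsto \langle \psi^* | e^{-t\hat{A}_i} \hat{h}_i e^{t\hat{A}_i} | \psi^* \rangle$ at its minimum $t=0$ gives $\langle \psi^* | [\hat{h}_i, \hat{A}_i] | \psi^* \rangle = 0$. Hence $|\psi^*\rangle$ is a Nash state that additionally satisfies the minimum inequalities, i.e. a Nash minimum state. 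This last step is just the implication "Definition \ref{def:Nash_eqstate} is stronger than Definition \ref{def:Nash_state}" already noted in the text, so it may be cited rather than repeated. I expect the only real subtlety to be conceptual rather than technical: unlike the ground states in Theorem \ref{thm:eigenstatethm}, the state $|\psi^*\rangle$ minimizes the energy only over the restricted class of product states, and the argument succeeds precisely because each $G_i$ preserves this class, so restricted optimality already suffices. Verifying the weight-$\tfrac{1}{2}$ identity is routine bookkeeping over the edges incident to each site.
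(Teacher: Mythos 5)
Your proposal is correct and follows essentially the same route as the paper: the paper leaves Corollary \ref{cor:prod} as an immediate consequence of the argument for the second statement of Theorem \ref{thm:eigenstatethm}, and your proof is precisely that argument, with the single new ingredient being that each $G_i$ preserves the class of product states so that restricted optimality suffices. Your explicit weight-$\tfrac{1}{2}$ operator identity and the closing remark that the global inequalities imply the stationarity condition (needed here because $|\psi^*\rangle$, unlike a true ground state, is not an eigenstate) are exactly the details the paper tacitly relies on.
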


A striking conclusion of these combined results is that strictly $k$-local Hamiltonians $\hat{H}$ give rise to (at least) two distinct Nash minimum states: the ground state of $\hat{H}$ and the optimal product-state approximation to the ground state of $\hat{H}$. (Recall that the former and the latter generically differ in energy except in the limits of very high degree~\cite{brandao2013product} or very large qudit dimension~\cite{lieb1973classical,bulchandani2024classical} and whenever they do differ, the energy of $\hat{H}$ yields an entanglement witness at sufficiently low temperatures\cite{Dowling_2004}.) Finally, we expect that such ``solvable'' examples of Nash states generically define merely an exponentially large family of discrete points in the uncountably large algebraic variety of Nash states that was discussed in the previous section.

Together, these results imply that determining all the points in a given $N$-qubit Nash variety is at least as hard as solving the $N$-qubit ``local Hamiltonian problem'' from quantum computational complexity theory~\cite{kitaev2002classical,piddock2015complexity}. We will revisit this point below in the context of variational quantum algorithms for Hamiltonians within the purview of Theorem \ref{thm:eigenstatethm}.

\section{Nash states at non-zero temperature}
\label{sec:temperature}
In the previous section, we exhibited some specific families of local Hamiltonians $\hat{H} = \sum_{i=1}^M \hat{h}_i$ for which ground states could be viewed as Nash minimum states of $\{\hat{h}_i\}_{i=1}^M$. A natural question in the context of physical realizations of such ground states is how far this ``minimality'' property extends to small but non-zero temperatures. In this setting, global minimality in the sense of Definition \ref{def:Nash_eqstate} is an overly strong requirement. We instead ask the following question: does the property of being a local Nash minimum state, in the sense of Definition \ref{def:loc_Nash_eqstate}, persist to non-zero temperatures?

For concreteness, consider an $N$-qubit state $|\psi\rangle$ that is a Nash minimum state with respect to the local operators $\{\hat{h}_i\}_{i=1}^N$ and Lie groups of unitary operators $G_i$ and simultaneously a ground state of their associated Hamiltonian $\hat{H} = \sum_{i=1}^N \hat{h}_i$. Then picking a basis $\{\hat{A}_{ia}\}_{a=1}^{\dim(\mathfrak{g}_i)}$ for each Lie algebra $\mathfrak{g}_i$, a sufficient condition for the local Nash minimum state property of Definition \ref{def:loc_Nash_eqstate} to persist to low-temperature thermal states $\hat{\rho}_{\beta}= \frac{1}{Z_{\beta}}e^{-\beta \hat{H}}$ with $Z_\beta = \mathrm{tr}[e^{-\beta \hat{H}}]$ is that the Hessian 
\begin{equation} \label{eq:Hessian_condition}
    \mathrm{HS}^{(i)}_{ab} = \mathrm{tr}\left[\hat{\rho}_\beta\left( \frac{1}{2}\{\hat{h}_i,\{\hat{A}_{ia}, \hat{A}_{ib}\}\} - \hat{A}_{ia} \hat  h_i  \hat{A}_{ib}-\hat{A}_{ib} \hat  h_i  \hat{A}_{ia} \right)\right]
\end{equation}
is positive semidefinite for all $i = 1, 2, ..., N$ and sufficiently low temperatures.

Specifically, we consider the one-dimensional transverse field Ising model (TFIM) with Hamiltonian
\begin{equation}
    \hat H_{\mathrm{TFIM}} = -\sum_{i=1}^N \hat Z_i \hat Z_{i+1} - g\sum_{i=1}^N \hat X_i
\end{equation}
and periodic boundary conditions. We can write this as $\hat H_{\mathrm{TFIM}} = \sum_{i=1}^N \hat h_i$, where
\begin{equation}
    \label{eq:local_decomp}
    \hat h_i =  - \frac{1}{2}\hat Z_i (\hat Z_{i-1} + \hat Z_{i+1}) - g\hat  X_i.
\end{equation}
One can show that the ground state and eigenstates of $\hat{H}$ define Nash minimum states and Nash states of $\{\hat{h}_i\}_{i=1}^N$ with respect to single-qubit rotations $G_i \cong SU(2)$ on qubit $i$, see Appendix \ref{app:GSareNE}. 

The 3-by-3 Hessians $\mathrm{HS}_i$ are identical for all $i$ by translation invariance and can be calculated analytically,
\begin{equation} \label{eq:isinghessian}
    \mathrm{HS}^{(i)} = \begin{pmatrix}
        4 \langle z z \rangle_\beta & 0 & 0 \\
        0 & 4\langle z z \rangle_\beta +4g \langle x\rangle_\beta & 0 \\
        0 & 0 & 4g\langle x\rangle_\beta
    \end{pmatrix}
\end{equation}
where we have used the shorthands $\langle z z\rangle_\beta  \equiv \braket{\hat Z_i \hat Z_{i+1}}_\beta$ and $\langle x\rangle_\beta \equiv \braket{\hat X_i}_\beta$ for thermal averages. At zero temperature this matrix must be positive definite by the Nash minimum state property of the TFIM ground state, and by non-negativity of the expectation values $\langle zz \rangle_{\beta}$ and $\langle x\rangle_{\beta}$ for $g>0$ and any finite temperature, it follows by Eq. \eqref{eq:isinghessian} that Gibbs states are local Nash minima for any $T>0$. See Fig. \ref{fig:XZcor} for an illustration. Another way to interpret this result is that if we sample a ``representative state'' by filling Bogoliubon orbitals according to Fermi-Dirac statistics, then this eigenstate is almost surely a local Nash minimum in the thermodynamic limit\cite{khemani_eigenstate_2014}. For completeness, details of the calculation of $\langle x\rangle_\beta$ and $\langle zz\rangle_\beta$ in Eq, \eqref{eq:isinghessian} for the TFIM are summarized in Appendix \ref{app:hess}.

\begin{figure}
    \centering
    \includegraphics[width=0.5\textwidth]{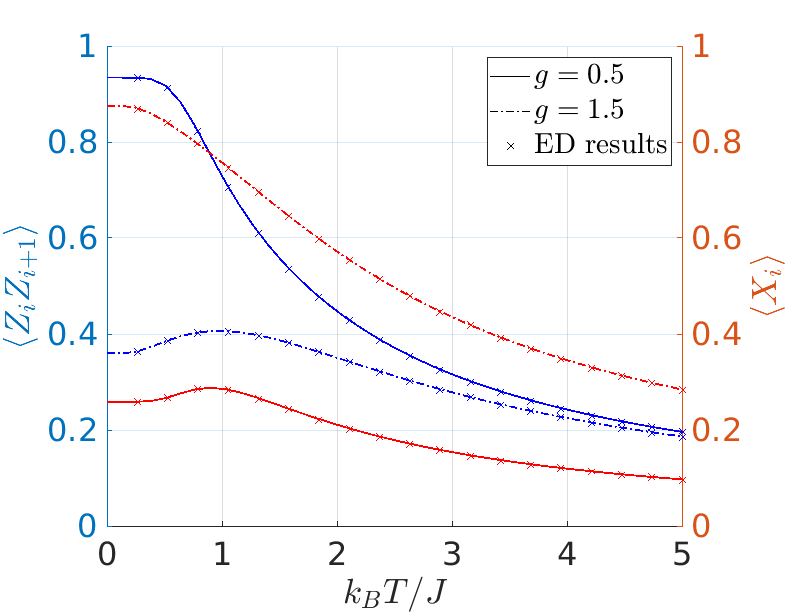}
    \caption{The finite temperature correlators $\braket{zz}_\beta$ and $\braket{x}_\beta$ plotted against temperature $k_BT/J$ with $J=1$. The solid/dashed lines use Eq. \eqref{eq:thermalaveraged2pt} to evaluate the thermal expectation values at $g=0.5, 1.5$ respectively, with $N=10$. The results are consistent with exact diagonalization.}
    \label{fig:XZcor}
\end{figure}

\section{Connection to quantum games} \label{sec:connection}
\subsection{Nash maximum states are Nash equilibria}
We now show as promised that every single Nash maximum state in the sense of Definition \ref{def:Nash_state} corresponds to a pure-strategy Nash equilibrium of a multi-player quantum game, that we call the ``multi-observable game''. Nash equilibria of non-cooperative quantum games were introduced in Refs. \onlinecite{meyer_quantum_1999} and \onlinecite{eisert_quantum_1999}. Such games are distinct from so-called quantum nonlocal games, which are usually cooperative in nature~\cite{brassard2005quantum}.

We define the multi-observable game as follows.
\begin{defn}
\label{def:MOG}
The \emph{$(M,N)$-multi-observable game} is an $M$-player quantum game, in which players have knowledge of a set of $M$ observables $\hat{h}_i$ and access to a shared pure state $|\psi_0\rangle$ on $N$ qubits. During their turn, Player $i$ is allowed to act on the shared state with unitary operators from a compact Lie group $G_i$, where $[G_i,G_j] = 0$ for $j \neq i$. Let $|\psi\rangle$ denote the shared pure state after all the players have completed their turns and
\begin{equation}
u_i = \langle \psi | \hat{h}_i | \psi \rangle.
\end{equation}
Then player $i$ seeks to maximize $u_i$ and wins the game if $u_i \geq u_j$ for all $j \neq i$.
\end{defn}
We will refer to any full specification of the data $\mathcal{G} = (M,\,N,\,|\psi_0\rangle,\{G_i\}_{i=1}^M,\{\hat{h}_i\}_{i=1}^M)$ in Definition \ref{def:MOG} as a ``multi-observable game instance''. More formally, every multi-observable game instance defines a ``non-cooperative static game with complete information''~\cite{gibbons1992game,lee_efficiency_2003,bostanci_quantum_2022}, whose normal form is the tuple $(\{G_i\}_{i=1}^M; \{u_i\}_{i=1}^M)$, where the strategy space for player $i$ is the Lie group $G_i$ and the payoff function $u_i: \prod_{j=1}^M G_j \to \R{}$ for player $i$ is given by the expectation values of the observable $\hat{h}_i$, namely
\begin{equation}
u_i(\hat{U}_1,\hat{U}_2,\ldots,\hat{U}_M) = \langle \psi_0 | \left(\prod_{j=1}^M \hat{U}_j^\dagger\right)\hat h_i \left(\prod_{j=1}^M \hat{U}_j\right)|\psi_0\rangle.
\end{equation}
As usual~\cite{lee_efficiency_2003,bostanci_quantum_2022}, the multi-observable game can be formulated as a purely classical game, with an inevitable exponential-in-$N$ computational overhead that arises from encoding the payoff functions classically. All standard notions of classical game theory, such as Nash equilibria, thus apply to the multi-observable game.

Note that our requirement that the Lie groups $G_i$ are pairwise commuting means that there is no difference between the ``static'' and the ``dynamic'' formulations~\cite{gibbons1992game} of the multi-observable game: the order in which the players play the game is immaterial. We have correspondingly formulated Definition \ref{def:MOG} as a static game. If this requirement of pairwise commutativity of the $G_i$ is relaxed, then the multi-observable game becomes a dynamic game and Definition \ref{def:MOG} must be modified accordingly.

Having defined the multi-observable game, we first verify our earlier claim that the Nash equilibrium conditions for this game recover the Nash maximum states of Definition \ref{def:Nash_eqstate}. A formal statement of this equivalence is as follows.
\begin{lemma}
\label{lemma:NashEq}
Pure-strategy Nash equilibria of the multi-observable game Definition \ref{def:MOG} are in one-to-one correspondence with Nash maximum states in the sense of Definition \ref{def:Nash_eqstate}.
\end{lemma}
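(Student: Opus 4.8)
The plan is to reduce the game-theoretic Nash equilibrium condition for the multi-observable game (Definition~\ref{def:MOG}) to the Nash maximum state condition of Definition~\ref{def:Nash_eqstate}, exploiting the pairwise commutativity $[G_i,G_j]=0$, and then to exhibit explicit maps in both directions.

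First I would fix a game instance and a candidate equilibrium profile $(\hat{U}_1^*,\ldots,\hat{U}_M^*)\in\prod_j G_j$, write the resulting shared state as $|\psi\rangle=\bigl(\prod_j \hat{U}_j^*\bigr)|\psi_0\rangle$, and consider a unilateral deviation of player $i$ to an arbitrary $\hat{U}_i\in G_i$. The crucial step is that, because $\hat{U}_i$ commutes with every $\hat{U}_j^*$ for $j\neq i$, the deviated state equals $\tilde{U}_i|\psi\rangle$ with $\tilde{U}_i=\hat{U}_i\hat{U}_i^{*\dagger}\in G_i$, and as $\hat{U}_i$ ranges over $G_i$ so does $\tilde{U}_i$. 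Hence the deviated payoff is $\langle\psi|\tilde{U}_i^\dagger \hat{h}_i\tilde{U}_i|\psi\rangle$, and the equilibrium inequality $u_i(\ldots,\hat{U}_i,\ldots)\le u_i(\hat{U}_1^*,\ldots,\hat{U}_M^*)$ for all $\hat{U}_i\in G_i$ becomes exactly
\begin{equation}
\langle\psi|\tilde{U}_i^\dagger \hat{h}_i\tilde{U}_i|\psi\rangle \le \langle\psi|\hat{h}_i|\psi\rangle,\quad \text{for all }\tilde{U}_i\in G_i,
\end{equation}
which is the Nash maximum state condition for each $i$. Differentiating this inequality along a one-parameter subgroup $\tilde{U}_i=e^{t\hat{A}_i}$ at $t=0$ recovers the stationarity condition $\langle\psi|[\hat{h}_i,\hat{A}_i]|\psi\rangle=0$ of Definition~\ref{def:Nash_state}, so no separate check that $|\psi\rangle$ is a Nash state is required.

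For the converse, given any Nash maximum state $|\psi\rangle$ I would construct the game instance with reference state $|\psi_0\rangle=|\psi\rangle$ and the trivial profile $(\hat{\mathbbm{1}},\ldots,\hat{\mathbbm{1}})$: the resulting state is $|\psi\rangle$, and the Nash maximum inequality is precisely the statement that no unilateral deviation increases any payoff, so this profile is a Nash equilibrium. Together with the forward direction, this shows that the set of shared states arising from Nash equilibria coincides exactly with the set of Nash maximum states.

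The main obstacle is making ``one-to-one'' genuinely precise, since the assignment of a profile to its shared state $|\psi\rangle=\bigl(\prod_j\hat{U}_j\bigr)|\psi_0\rangle$ is many-to-one: distinct profiles, and distinct reference states, can yield the same physical state. I would therefore state the correspondence at the level of resulting shared states. The equivalence above makes this assignment well defined and surjective onto the Nash maximum states, while the explicit identity-profile construction supplies a canonical section; upon identifying equilibria that produce the same shared state, the correspondence is then a bijection.
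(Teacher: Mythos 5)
Your proof follows essentially the same route as the paper's: the forward direction uses pairwise commutativity of the $G_j$ to absorb a unilateral deviation into a single group element $\tilde{U}_i \in G_i$ acting on the shared state $|\psi\rangle = \prod_j \hat{V}_j |\psi_0\rangle$, and the converse uses the ``do nothing'' profile $(\hat{\mathbbm{1}},\ldots,\hat{\mathbbm{1}})$ with the Nash maximum state taken as the reference state. Your additional observations---that the global maximality inequality implies the stationarity condition of Definition~\ref{def:Nash_state} by differentiation along one-parameter subgroups, and that ``one-to-one'' must be read at the level of resulting shared states since the profile-to-state map is many-to-one---are correct refinements of points the paper leaves implicit.
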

\begin{proof}
For the forward implication, note that a pure-strategy Nash equilibrium of a multi-observable game instance $\mathcal{G} = (M,\,N,\,|\psi_0\rangle,\{G_i\}_{i=1}^M,\{\hat{h}_i\}_{i=1}^M)$ consists~\cite{gibbons1992game} of a strategy profile $(\hat{V}_1,\hat{V}_2,\ldots,\hat{V}_M) \in \prod_{j=1}^M G_j$ that satisfies
\begin{align}
\label{eq:stratNE}
\langle \psi_0 | \hat{V}_{-i}^\dagger \hat{U}_i^\dagger  \hat{h}_i \hat{U}_i\hat{V}_{-i} |\psi_0 \rangle
\leq \langle \psi_0 | \hat{V}_{-i}^\dagger \hat{V}_i^\dagger  \hat{h}_i \hat{V}_i\hat{V}_{-i} |\psi_0 \rangle
\end{align}
for all $i=1,2,\ldots,N$ and all $\hat{U}_i \in G_i$, where $
\hat{V}_{-i} = \prod_{j\neq i}\hat{V}_j$. It follows that the state $|\psi\rangle = \prod_{i=1}^M \hat{V}_i|\psi_0\rangle$ is a Nash maximum state of the operators $\{\hat{h}_i\}_{i=1}^M$ and the Lie groups $\{G_i\}_{i=1}^M$ in the sense of Definition \ref{def:Nash_state}. For the reverse implication, let $|\psi_0\rangle$ be a Nash maximum state of the operators $\{\hat{h}_i\}_{i=1}^M$ with respect to the Lie groups $\{G_i\}_{i=1}^M$ and note that the ``do nothing'' strategy profile $(\hat{\mathbbm{1}},\hat{\mathbbm{1}},\ldots,\hat{\mathbbm{1}})$, in which all players act with the identity, defines a Nash equilibrium of the multi-observable game instance $\mathcal{G} = (M,\,N,\,|\psi_0\rangle,\{G_i\}_{i=1}^M,\{\hat{h}_i\}_{i=1}^M)$.
\end{proof}

We will have this equivalence in mind whenever we refer to Nash maximum states of a given multi-observable game instance $\mathcal{G}$. As in Section \ref{sec:introNashstates}, Definition \ref{def:MOG} and Lemma \ref{lemma:NashEq} extend straightforwardly from pure states to density matrices.

\subsection{A case study: the Quantum Prisoner's Dilemma}
\label{sec:case_study}
\begin{table}[t]
    \centering
    \begin{tabular}{|c|cc|}
    \hline
 & \textsc{Cooperate} & \textsc{Defect}\\
 \hline
         \textsc{Cooperate} &   $(3,3)$&$(0,5)$\\
         \textsc{Defect} & 
     $(5,0)$&$(1,1)$ \\
     \hline
     \end{tabular}
\caption{Payoff matrix for the classical Prisoner's Dilemma}
\label{tab:prisoner_dilemma}
\end{table}

We now illustrate the utility of Definition \ref{def:MOG} by showing how one of the canonical examples of non-cooperative quantum games~\cite{eisert_quantum_1999,benjamin_comment_2001,meyer_quantum_1999,kolokoltsov_quantum_2019}, the \textit{Quantum Prisoner's Dilemma}, can be realized and analyzed as an instance of the multi-observable game. We choose the same payoff matrix to set up the comparison with the classical Prisoner's Dilemma as were considered in Ref. \onlinecite{eisert_quantum_1999}, see Table \ref{tab:prisoner_dilemma}. Classically, this game admits the dominant strategy $(\textsc{Defect}, \textsc{Defect})$, which is also the Nash equilibrium. The Prisoner's Dilemma can be quantized by representing each player's strategy by a qubit, with the ket $\ket{0}$ corresponding to the strategy $\textsc{Cooperate}$ and the ket $\ket{1}$ corresponding to the strategy $\textsc{Defect}$. Starting from a (possibly entangled) two-qubit state, corresponding to a superposition of possible strategies for the players, each player is allowed to apply single-qubit unitaries to their qubit. Payoffs are then assigned according to the outcome of a measurement in the computational basis and the corresponding entry of Table \ref{tab:prisoner_dilemma}. 

This defines a family of $(2,2)$-multi-observable games $\mathcal{G}_{\mathrm{QPD}}$ in the sense of Definition \ref{def:MOG}, where the Lie groups $G_i \cong SU(2)$ correspond to single-qubit rotations of each player's qubit and the observables defining each player's payoff function are given by
\begin{equation}
\label{eq:qpdpayoffs}
    \hat{h}_1 = \begin{pmatrix}
        3 & 0 & 0 & 0 \\
        0 & 0 & 0 & 0 \\
        0 & 0 & 5 & 0 \\
        0 & 0 & 0 & 1
    \end{pmatrix}, \quad
    \hat{h}_2 = \begin{pmatrix}
        3 & 0 & 0 & 0 \\
        0 & 5 & 0 & 0 \\
        0 & 0 & 0 & 0 \\
        0 & 0 & 0 & 1
    \end{pmatrix}.
\end{equation}

In the following, we will solve for the Nash variety and the set of Nash maximum states of $\{\hat{h}_1,\hat{h}_2\}$ with respect to single-qubit rotations. As in Section \ref{subsec:visualizing},  we first argue that it suffices to consider ``rebits'', i.e. state vectors
\begin{equation}
    \label{eq:rebits}
    \ket{\psi} =X_{0}\ket{00} +X_{1}\ket{01} +X_{2}\ket{10} + X_{3} \ket{11} 
\end{equation}
with $(X_{0}, X_{1}, X_{2}, X_{3}) \in \mathbb{R}^4$. This restriction is justified carefully in Appendix \ref{app:rebits}. In these coordinates, the Nash state conditions Eq. \eqref{eq:defVprime} define the real algebraic variety
\begin{equation}
    \label{eq:NV_qpd}
    \left\{
    \begin{aligned}
        & 2 X_0 X_2 + X_1 X_3 = 0, \\
        & 2X_0 X_1 + X_2 X_3 = 0.
    \end{aligned}
    \right.
\end{equation}

For the sake of visualization and by homogeneity of Eq. \eqref{eq:defVprime}, we now restrict our attention to unit normalized rebits $X_0^2+X_1^2+X_2^2 + X_3^2=1$ and use stereographically projected coordinates $(x,y,z) \in \mathbb{R}^3$ to represent Nash states, with
\begin{equation}
    \label{eq:stereoprojection}
    x_i = \frac{X_{i}}{1+X_{0}},
\end{equation}
the usual identifications $x_1\equiv x, \, x_2 \equiv y, \, x_3 \equiv z$, and the inverse transformation given by
\begin{equation}
    \label{eq:invstereoprojection}
    X_{i} = \frac{2x_{i}}{1+r^{2}}, \quad X_{0} = \frac{1-r^{2}}{1+r^{2}},
\end{equation}
where $r^2=x^2+y^2+z^2$. Making this substitution in Eq. \eqref{eq:NV_qpd} and plotting the resulting solutions, the (unit normalized) Nash variety exhibits four one-dimensional components that are intersecting or linked, see Figure \ref{fig:nvqpd} for an illustration. For a Nash state $\ket{\psi}$ to also be a Nash maximum state, it must further satisfy the following inequality constraints:
\begin{equation}
    \label{eq:Nashmax4qpd}
    \left\{
    \begin{aligned}
        &  2(X_{0}^{2}-X_{2}^{2})+(X_{1}^{2}-X_{3}^{2}) \leq 0, \\
        & 2(X_{0}^{2}-X_{1}^{2}) + (X_{2}^{2}-X_{3}^{2}) \leq 0.
    \end{aligned}
    \right.
\end{equation}
It is shown in Appendix \ref{app:NEcon} that if $|\psi\rangle$ is a Nash state, the equations Eq. \eqref{eq:Nashmax4qpd} are both necessary and sufficient for $|\psi\rangle$ to be a Nash maximum state of the Quantum Prisoner's Dilemma. Imposing these inequality constraints on the Nash variety yields the blue points in Fig. \ref{fig:nvqpd}. As in Section \ref{subsec:visualizing}, the object being plotted is a double cover of the corresponding set of Nash states, which is obtained by the identification $(X_0,X_1,X_2,X_3) \sim (-X_0,-X_1,-X_2,-X_3)$. Notice that both Eqs. \eqref{eq:NV_qpd} and \eqref{eq:Nashmax4qpd} are invariant under the permutation symmetry $X_1 \leftrightarrow X_2$ interchanging the two players, which yields a reflection symmetry in the plane $x=y$ in Fig. \ref{fig:nvqpd} and is a consequence of the symmetry of the underlying classical game. 

\begin{figure}
    \centering
    \includegraphics[width=0.5\textwidth]{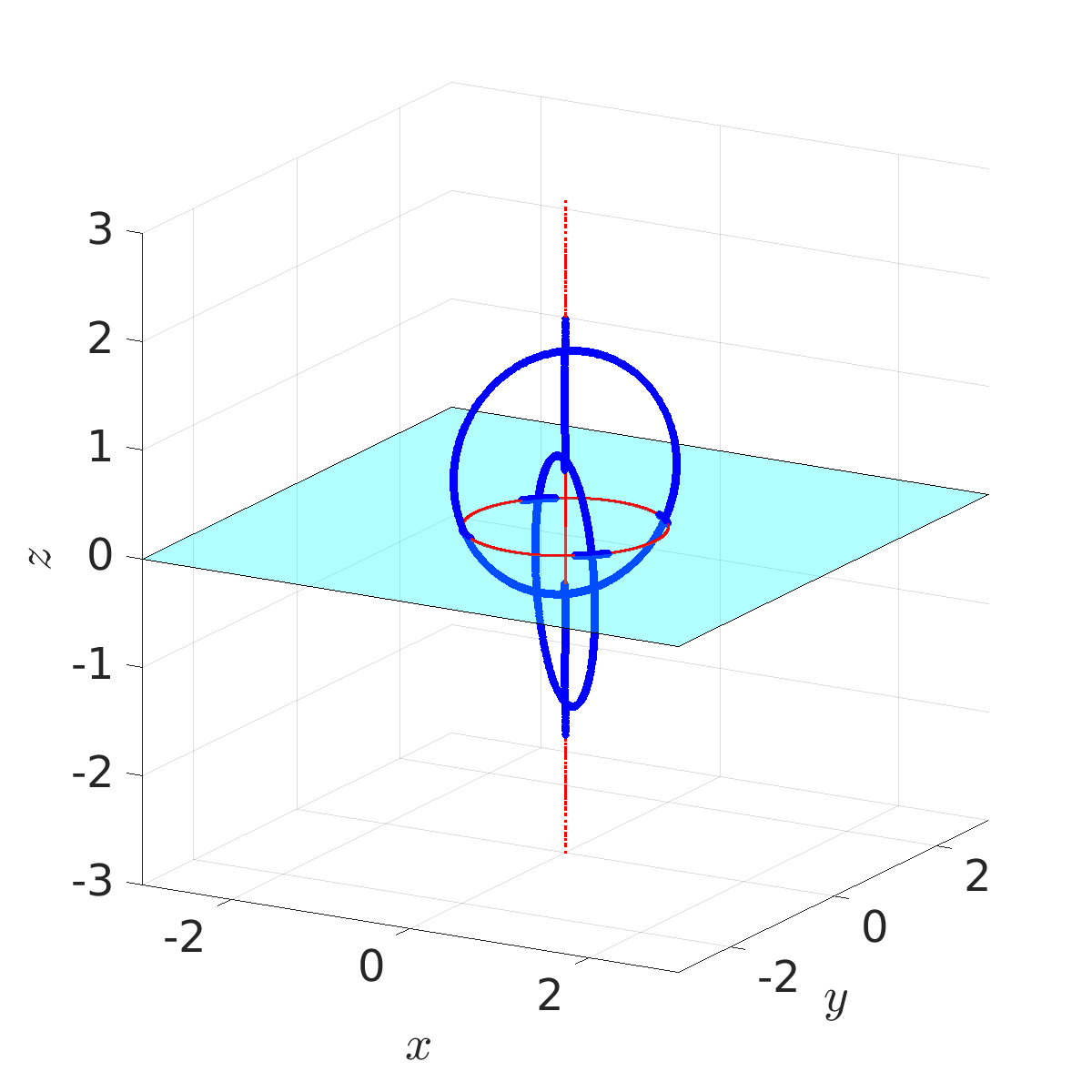}
    \caption{A visualization of the Nash variety and the set of Nash maximum states for the Quantum Prisoner's Dilemma in stereographically projected coordinates. Nash states are plotted in red. Nash states that are also Nash maximum states and therefore define pure-strategy Nash equilibria of the Quantum Prisoner's Dilemma by Lemma \ref{lemma:NashEq} are plotted in blue. The latter are obtained by imposing the inequality constraints Eq. \eqref{eq:Nashmax4qpd}.}
    \label{fig:nvqpd}
\end{figure}

We next flesh out the identification between such Nash maximum states and the Nash equilibria of Quantum Prisoner's Dilemma instances $\mathcal{G}_\mathrm{QPD}$ with specified initial states $\ket{\psi_0}$. For some intuition, note that in two-qubit Hilbert space, single-qubit rotations $G_1 \times G_2 \cong SU(2) \times SU(2)$ acting on a fixed initial state $|\psi_0\rangle$ generate an orbit in state space with constant bipartite entanglement entropy. The existence of a pure-strategy Nash equilibrium for specific instances of $\mathcal{G}_{\mathrm{QPD}}$ thus requires a non-empty intersection between this constant-entanglement orbit and the set of Nash maximum states.

When restricting to states $|\psi_0\rangle$ with real coefficients, these orbits can be characterized by quartic equations in the projected coordinates (see Appendix \ref{app:orbits} for details). Fig. \ref{fig:intersections} shows how different orbits intersect with the Nash variety. For the separable orbit Fig. \ref{fig:subfig1}, there are only two intersections with the set of Nash maximum states at $(x,y,z)= (0,0,\pm 1)$. Upon complex projectivization, both these intersection points map to the same ket $\ket{\psi^*} = \ket{11}$, which recovers the standard Nash equilibrium $(\textsc{Defect},\textsc{Defect})$ of the classical Prisoner's Dilemma. If we instead start from a maximally entangled state, the intersection between the maximally entangled orbits and the Nash variety contains four distinct points $(x,y,z) = (\pm 1, \pm 1, 0)$, as shown in Fig. \ref{fig:subfig3}. Up to an overall phase, these correspond to the states
\begin{equation}
    \ket{\psi^*} = \frac{1}{\sqrt{2}} \left( \ket{01} \pm \ket{10} \right),
\end{equation}
providing a ``quantum escape'' from the Prisoner's Dilemma with $\braket{\hat{h}_1}=\braket{\hat{h}_2}=5/2$. More precisely, quantumness of the protocol allows for \textit{correlated} Nash equilibria in which the prisoners effectively randomly choose between the strategies $(\textsc{Cooperate},\textsc{Defect})$ and $(\textsc{Defect},\textsc{Cooperate})$. 

\begin{figure*}[hbt!]
  \centering
  \subfigure[]{
    \includegraphics[width=0.3\textwidth]{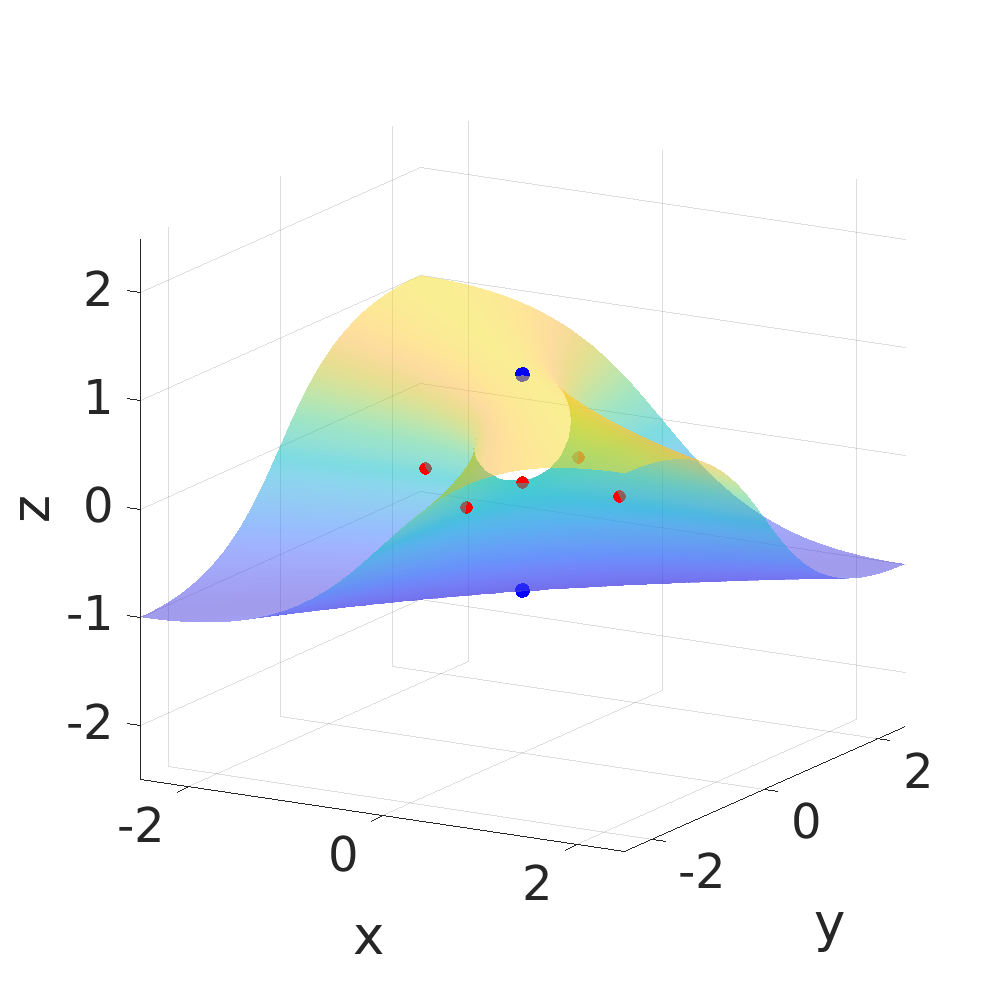}
    \label{fig:subfig1}
  }
  \subfigure[]{
    \includegraphics[width=0.3\textwidth]{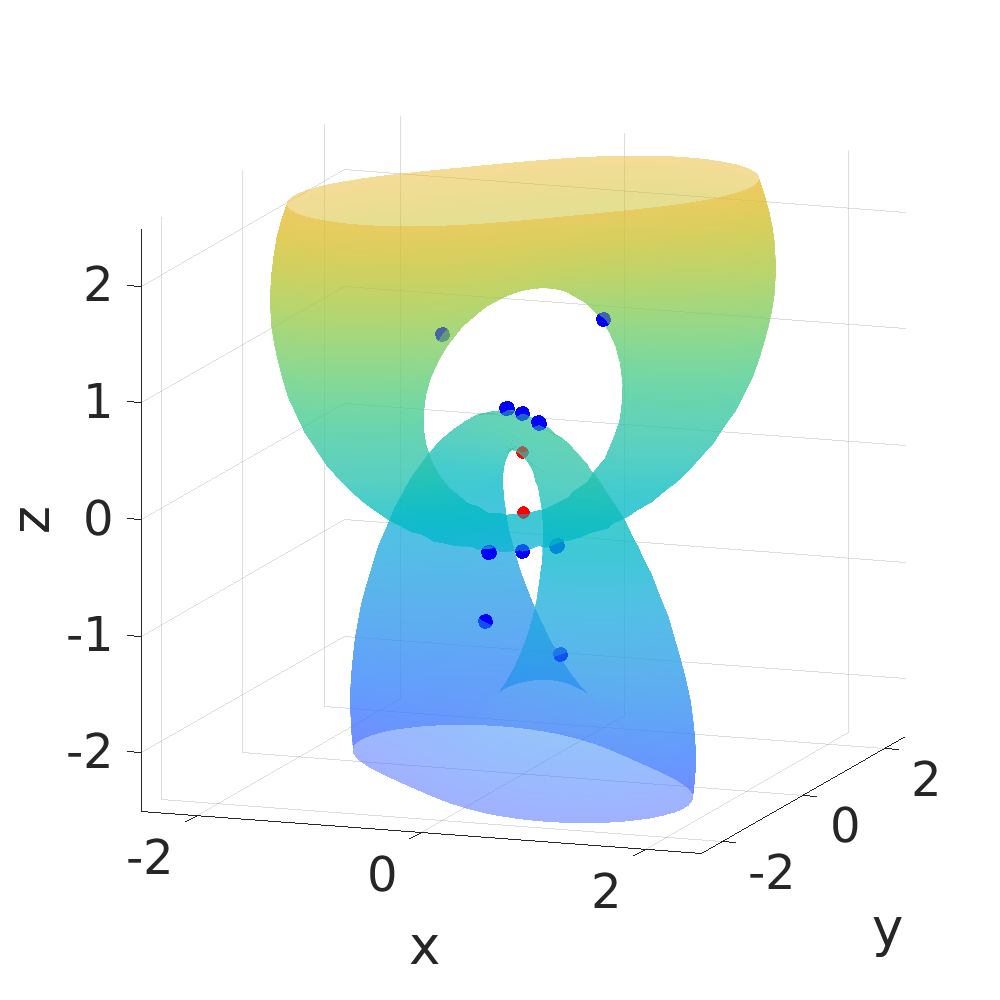}
    \label{fig:subfig2}
  }
  \subfigure[]{
    \includegraphics[width=0.3\textwidth]{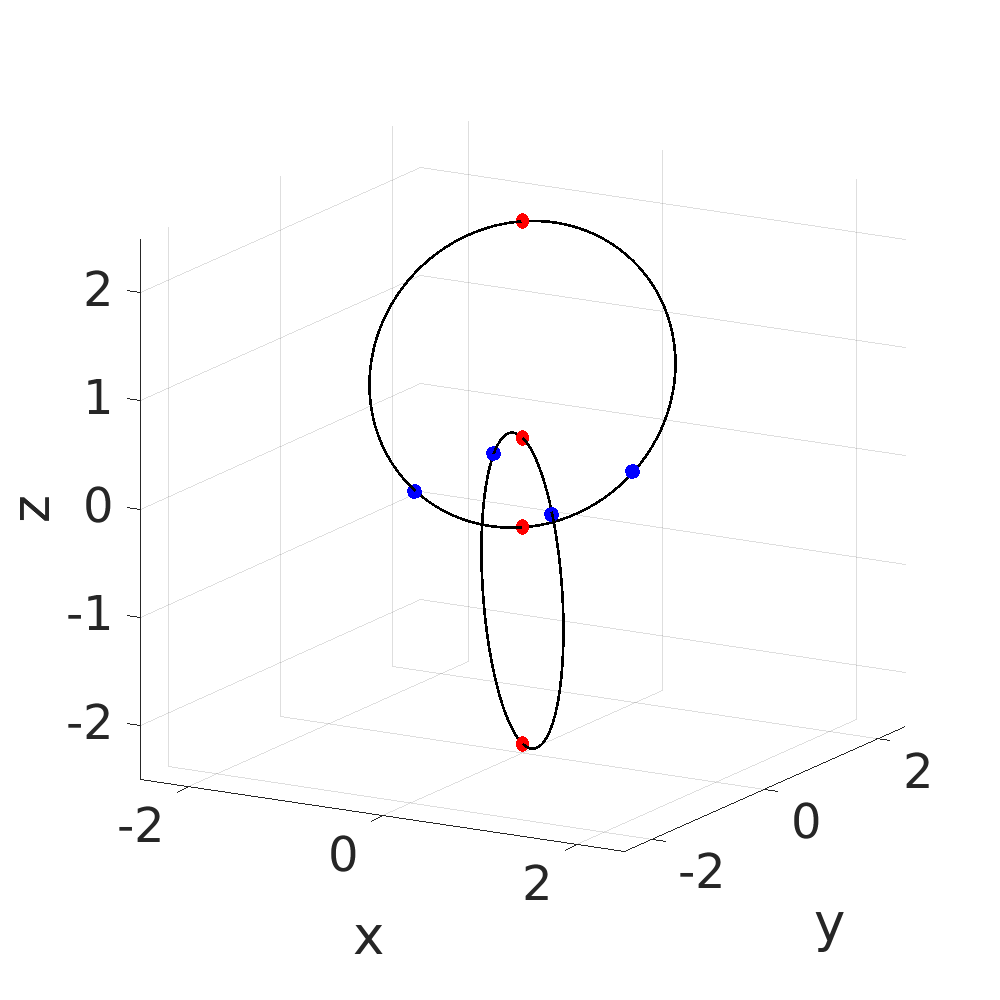}
    \label{fig:subfig3}
  }
  \caption{Visualization of the intersections of the set of Nash maximum states for the Quantum Prisoner's Dilemma, as in Fig. \ref{fig:nvqpd}, with (a) the separable orbit; (b) orbits with intermediate entanglement with $\chi = 0.22$ ; (c) maximally entangled orbits (linked circles in black). The red points indicate Nash states, while blue points are Nash maximum states on the corresponding orbits.}
  \label{fig:intersections}
\end{figure*}

The above discussion reformulates the problem of finding Nash equilibria of quantum games as a purely geometrical problem of finding intersections between orbits of states under Lie group actions and the set of Nash maximum states. By exploiting the low dimensionality of the two-rebit Hilbert space, we have further obtained and visualized the full set of Nash maximum states for the Quantum Prisoner's Dilemma. As a final remark, although the ``generic'' case depicted in Fig. \ref{fig:subfig2} is seen to host a pure-strategy Nash equilibrium, given Theorem 2 of Ref. \onlinecite{meyer_quantum_1999}, we do not expect that \emph{all} choices of initial states $|\psi_0\rangle$ will exhibit pure-strategy Nash equilibria for generic multi-observable game instances $\mathcal{G}$.

\section{Approximate Nash states}
In this section, we consider what happens when the exact pure Nash state condition Eq. \eqref{eq:Nashstate} is relaxed to an approximate extremality condition, namely the right-hand side of Eq. \eqref{eq:Nashstate} being polynomially or exponentially small in $N$ instead of exactly zero. This is relevant both for making connections with approximate extrema of variational quantum algorithms~\cite{McClean_2018,chen2023local} and for formulating standard notions of computational complexity for Nash states (for the reason that constructing exact vectors with generic complex-valued entries lies beyond any digital computation scheme). 

We first make the observation that approximate Nash states are ``ubiquitous'' in Hilbert space. Intuitively, this is because states drawn uniformly randomly from state space are close to being at infinite temperature, and the infinite temperature state is a Nash state. 

We then solidify the specific connection between Nash states and variational quantum algorithms for finding the ground states of certain classes of local Hamiltonians. Finally, we exploit this connection to the local Hamiltonian problem to make some conjectural statements on the computational complexity of approximating pure Nash equilibria of quantum games, a problem that appears to be comparable in hardness to its classical counterpart~\cite{fabrikant2004complexity}.

\subsection{Approximate Nash states are ubiquitous}
\label{sec:ubiquitous}
Equation \eqref{eq:puredimcount} implies that for large numbers of qubits $N$, the dimension of the space of pure Nash states is comparable to the exponentially large dimension of the full set of pure states. This raises the question of how pure Nash states are distributed in the space of states: are they restricted to an isolated region of state space or is there a sense in which every pure state is ``close'' to a Nash state? In this section, we show that the results of Ref. \onlinecite{chen2023local} favour the latter scenario: for a physically reasonable class of operators and Lie groups, approximate Nash states are ubiquitous in the space of states.

To this end, we first recall Lemma C.1 of Ref. \onlinecite{chen2023local}. Informally speaking, this Lemma states that given a local Hamiltonian $\hat{H}$ and a set of local unitary operations, a random pure state is with high probability an approximate local minimum of $\hat{H}$ with respect to this set of local unitary operations. Throughout this section, all operators will be understood to act on the the same $N$-qubit Hilbert space.
    
\begin{lemma}
\label{lemma:caltech}
Consider a $k$-local Hamiltonian $\hat{H}$ with $\| \hat{H} \|_{\infty} = \mathrm{poly}(N)$ and $P$ $k$-local anti-Hermitian operators $\{\hat{A}_a\}_{a=1}^P$ with $\| \hat{A}_{a} \|_{\infty}=1$, where $k=\mathcal{O}(N^0)$ and $P = \mathrm{poly}(N)$. For $N$ sufficiently large and an exponentially small error threshold $\epsilon \geq 1 / 2^{N/4}$,  a uniformly random pure state $\ket{\psi}$ is an $\epsilon$-approximate local minimum of $\hat{H}$ under unitary perturbations generated by $\{ \hat{A}_{a} \}$ with probability at least $1-1 / 2^{2^{N/4}}$.
\end{lemma}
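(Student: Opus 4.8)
The plan is to prove this by concentration of measure, exploiting the fact noted in the main text that the maximally mixed (infinite-temperature) state has vanishing first- and second-order energy variations along every local unitary direction and therefore marginally satisfies the local-minimum conditions. A Haar-random pure state $\ket{\psi}$ concentrates sharply around this infinite-temperature point for any local observable, so with overwhelming probability both its gradient and its Hessian differ from their (vanishing) infinite-temperature values by at most $\epsilon$. Concretely, writing $d = 2^N$, the workhorse estimate is L\'evy's lemma on the sphere $S^{2d-1}$ (equivalently on $\cp{d-1}$): any $L$-Lipschitz function $f$ obeys $\Pr[|f - \mathbb{E}f| \geq \epsilon] \leq 2\exp(-c\, d\,\epsilon^2/L^2)$ for a universal constant $c$, the expectation being taken over the Haar measure.

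First I would treat the first-order (stationarity) condition. For each generator $\hat{A}_a$ the relevant quantity is $g_a(\psi) = \braket{\psi | [\hat{H},\hat{A}_a] | \psi}$, whose Haar mean is $\frac{1}{d}\operatorname{Tr}[\hat{H},\hat{A}_a] = 0$ since the trace of any commutator vanishes. The map $\psi \mapsto g_a(\psi)$ is Lipschitz with constant $L = \mathcal{O}(\|[\hat{H},\hat{A}_a]\|_\infty) = \mathcal{O}(\|\hat{H}\|_\infty \|\hat{A}_a\|_\infty) = \mathrm{poly}(N)$, using $\|\hat{H}\|_\infty = \mathrm{poly}(N)$ and $\|\hat{A}_a\|_\infty = 1$. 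L\'evy's lemma then gives $\Pr[|g_a| \geq \epsilon] \leq 2\exp(-c\, 2^N \epsilon^2/\mathrm{poly}(N))$. Substituting the threshold $\epsilon \geq 2^{-N/4}$ makes the exponent at least of order $2^{N/2}/\mathrm{poly}(N)$, which dominates $2^{N/4}$ for $N$ large, so each gradient lies within $\epsilon$ of zero except with probability at most $2^{-2^{N/4}}$.

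The main obstacle is the second-order (Hessian) condition, which requires that the whole bilinear form built from the $\{\hat{A}_a\}$---the analogue of $B_i$ in Eq. \eqref{eq:Nash_locmin}, with $\hat{H}$ in place of $\hat{h}_i$---be approximately positive semidefinite, rather than merely controlling individual matrix elements. I would first note that each Hessian entry $H_{ab}(\psi) = \braket{\psi|\frac{1}{2}\{\hat{H},\{\hat{A}_a,\hat{A}_b\}\} - \hat{A}_a\hat{H}\hat{A}_b - \hat{A}_b\hat{H}\hat{A}_a|\psi}$ again has vanishing Haar mean, because cyclicity of the trace forces $\operatorname{Tr}[\hat{H}\hat{A}_a\hat{A}_b] = \operatorname{Tr}[\hat{A}_a\hat{H}\hat{A}_b]$ and hence the trace of the bracketed operator is zero; it is likewise Lipschitz with constant $\mathrm{poly}(N)$. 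Concentration thus controls each $H_{ab}$ individually, but to bound the smallest eigenvalue of the $P \times P$ deviation matrix I would pass to an $\epsilon$-net of directions on the unit sphere in $\R{P}$ and union bound. Since $P = \mathrm{poly}(N)$, such a net has only $e^{\mathcal{O}(P)} = e^{\mathrm{poly}(N)}$ points, which is negligible against the doubly-exponential concentration exponent $\sim 2^{N/2}$; controlling the quadratic form on the net then controls its minimal eigenvalue up to factors absorbed into the constants.

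Finally I would assemble the pieces by a union bound over the $P$ gradient conditions and the net for the Hessian. Because each individual failure probability is at most $2^{-c\,2^{N/2}/\mathrm{poly}(N)}$ and the number of events is only $e^{\mathrm{poly}(N)}$, the total failure probability remains bounded by $2^{-2^{N/4}}$ once $N$ is large enough to absorb all polynomial prefactors into the exponent. The delicate points are purely quantitative: verifying that the Lipschitz constants are genuinely $\mathrm{poly}(N)$ (which enters only through the operator norms $\|\hat{H}\|_\infty$ and $\|\hat{A}_a\|_\infty$) and confirming that the net cardinality and the union-bound multiplicities never overwhelm the $2^{N/2}$ gap afforded by the choice $\epsilon \geq 2^{-N/4}$.
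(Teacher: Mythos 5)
First, a point of comparison: the paper does not prove this lemma at all --- it is imported verbatim as Lemma C.1 of Ref.~\onlinecite{chen2023local}, and the text only remarks that ``a key step'' in that reference's proof is the gradient bound $|\braket{\psi|[\hat H,\sum_a v_a \hat A_a]|\psi}| \leq \epsilon\|\pmb{v}\|_1$ for Haar-random $\ket{\psi}$. Your first part reconstructs exactly that step, and it is correct: the Haar mean of $\braket{\psi|[\hat H,\hat A_a]|\psi}$ vanishes by cyclicity of the trace, the Lipschitz constant is $\mathcal{O}(\|\hat H\|_\infty\|\hat A_a\|_\infty)=\mathrm{poly}(N)$, and L\'evy's lemma with $d=2^N$ and $\epsilon \geq 2^{-N/4}$ gives a per-generator failure probability $\exp\left(-c\,2^{N/2}/\mathrm{poly}(N)\right)$, which survives the union bound over $P=\mathrm{poly}(N)$ generators and is far below $2^{-2^{N/4}}$.

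The gap is in how you finish. The definition of ``$\epsilon$-approximate local minimum'' stated directly below the lemma is purely first order: one only needs the existence of \emph{some} $\delta>0$ such that $\bra{\psi'}\hat H\ket{\psi'} \geq \bra{\psi}\hat H\ket{\psi} - \epsilon\|\pmb{v}\|_1$ whenever $\|\pmb{v}\|_1<\delta$. No Hessian positivity is required, so the part of your argument that you flag as ``the main obstacle'' --- the $\epsilon$-net over directions in $\R{P}$ and approximate positive semidefiniteness of the Hessian form --- addresses a condition that is not part of the statement. Conversely, the step that \emph{is} needed after the gradient bound never appears in your write-up: all terms beyond first order in the expansion of $e^{-\hat A(\pmb{v})}\hat H e^{\hat A(\pmb{v})}$ must be absorbed, and this is done deterministically, not probabilistically. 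Since $\|\hat A(\pmb{v})\|_\infty \leq \|\pmb{v}\|_1$, the remainder beyond the commutator term is bounded by $C\|\hat H\|_\infty\|\pmb{v}\|_1^2$ for $\|\pmb{v}\|_1\leq 1$, so choosing $\delta = \min\left(1,\,\epsilon/(2C\|\hat H\|_\infty)\right)$ --- inverse-polynomially small, which the definition permits --- makes the remainder at most $(\epsilon/2)\|\pmb{v}\|_1$, and the gradient bound applied at threshold $\epsilon/2$ closes the argument. With that replacement your proof is correct and is essentially the concentration-of-measure argument of Ref.~\onlinecite{chen2023local}; as written, however, it proves a second-order statement that was not asked for while omitting the (easy but necessary) Taylor-remainder step that connects gradient concentration to the stated definition.
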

Here ``$\epsilon$-approximate local minimum'' means that the expectation value $\bra{\psi}\hat{H}\ket{\psi}$ is $\epsilon$-stable with respect to local unitary perturbations $\ket{\psi} \mapsto \ket{\psi'}:=\exp{(\sum_{a=1}^P v_a\hat{A}_a)}\ket{\psi}$. More precisely, there exists $\delta>0$ such that for any $\| \pmb{v} \|_1<\delta$, we have
\begin{equation}
    \bra{\psi'} \hat{H}\ket{\psi'} \geq \bra{\psi} \hat{H}\ket{\psi} - \epsilon \| \pmb{v} \|_1.
\end{equation}
A key step in proving~\cite{chen2023local} Lemma \ref{lemma:caltech} is showing that for uniformly random pure states $|\psi\rangle$,
\begin{equation}
|\braket{\psi|[\hat{H},\sum_{a=1}^P v_a\hat{A}_a]|\psi}| \leq \epsilon \|\pmb{v}\|_1
\end{equation}
with high probability. This rather naturally motivates a notion of ``approximate Nash states'', which can be formalized as follows.

\begin{defn}
    A pure state $\ket{\psi}$ is an \emph{$\epsilon$-approximate Nash state} with respect to a set of observables $\{\hat{h}_i\}_{i=1}^M$ and corresponding pairwise commuting Lie groups $\{G_i\}_{i=1}^M$ of unitary operators if for all $i=1,2,\dots,M$,
    \begin{equation}
    \label{eq:approxNashstate}
    |\langle \psi | [\hat{h}_i,\pmb{v}_{i}\cdot \hat{\mathbf{A}}_{i}] |\psi\rangle| \leq \epsilon \|\pmb{v}_i\|_1
    \end{equation}
    for all $\pmb{v}_{i} \in \mathbb{R}^{\dim{(\mathfrak{g}_{i})}}$, where the vector $\hat{\mathbf{A}}_{i}=(\hat{A}_{i1},\hat{A}_{i2},\ldots,\hat{A}_{i\dim(\mathfrak{g}_i)})$ denotes a basis for $\mathfrak{g}_{i}$ with the normalization $\|\hat{A}_{i\alpha}\|_{\infty} = 1$.
\end{defn}

The following Corollary of Lemma \ref{lemma:caltech} then states that under suitable locality assumptions on the operators $\{\hat{h}_i\}_{i=1}^M$ and Lie groups $\{G_i\}_{i=1}^M$, a randomly selected pure state $\ket{\psi}$ will with high probability be an $\epsilon$-approximate Nash state with respect to these operators and Lie groups.

\begin{col}
\label{cor:ubiq}
    Let $\{ \hat{h}_{i} \}_{i=1}^{M}$ be a set of $k$-local observables with  $\| \hat{h}_{i} \|_{\infty} = \mathrm{poly}(N)$ for all $i$, and for each $i=1,2,\ldots,M$, let $G_{i}$ be a Lie group of unitary operators generated by $k$-local anti-Hermitian operators $\{\hat{A}_{i\alpha}\}_{\alpha=1}^{\dim(\mathfrak{g}_i)}$ with $\|\hat{A}_{i\alpha}\|_{\infty}=1$. Further assume that $M = \mathcal{O}(N)$ and $k=\mathcal{O}(N^0)$. Subject to these asssumptions, a uniformly random pure state $\ket{\psi}$ is a $2^{-N/4}$-approximate Nash state with probability at least $1-2^{-2^{N/4}}$.
\end{col}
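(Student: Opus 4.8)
The plan is to reduce the Corollary to Lemma \ref{lemma:caltech} applied separately to each observable, followed by a union bound over $i$. The key observation is that, for a fixed $i$, the $\epsilon$-approximate Nash condition Eq. \eqref{eq:approxNashstate} is precisely the intermediate commutator estimate $|\langle\psi|[\hat h_i,\mathbf v_i\cdot\hat{\mathbf A}_i]|\psi\rangle|\le\epsilon\|\mathbf v_i\|_1$ that is shown to hold with high probability in the course of proving Lemma \ref{lemma:caltech}, now with the single Hamiltonian taken to be $\hat h_i$ and the perturbation generators taken to be the basis $\{\hat A_{i\alpha}\}$ of $\mathfrak g_i$. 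Thus the whole analytic burden is inherited from the cited Lemma, and what remains is (i) to check that its hypotheses hold for each $i$, (ii) to dispose of the quantifier over all $\mathbf v_i$, and (iii) to assemble the $M$ single-observable statements.

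First I would verify the hypotheses. Each $\hat h_i$ is $k$-local with $\|\hat h_i\|_\infty=\mathrm{poly}(N)$ by assumption, and the generators $\hat A_{i\alpha}$ are $k$-local anti-Hermitian with $\|\hat A_{i\alpha}\|_\infty=1$. Since $\{\hat A_{i\alpha}\}$ is a $k$-local basis of $\mathfrak g_i$ with $k=\mathcal O(N^0)$, the number of generators obeys $P=\dim(\mathfrak g_i)\le 4^k\binom{N}{k}=\mathrm{poly}(N)$, as required by Lemma \ref{lemma:caltech}. I would also record that each $[\hat h_i,\hat A_{i\alpha}]$ is Hermitian, traceless, at most $2k$-local, and of operator norm at most $2\|\hat h_i\|_\infty=\mathrm{poly}(N)$; these are exactly the features that the underlying concentration estimate consumes, namely a bounded, traceless Hermitian observable whose Haar average vanishes.

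Next I would dispose of the quantifier over $\mathbf v_i$. Writing $c_{i\alpha}=\langle\psi|[\hat h_i,\hat A_{i\alpha}]|\psi\rangle$, linearity gives $\langle\psi|[\hat h_i,\mathbf v_i\cdot\hat{\mathbf A}_i]|\psi\rangle=\sum_\alpha v_{i\alpha}c_{i\alpha}$, so $|\langle\psi|[\hat h_i,\mathbf v_i\cdot\hat{\mathbf A}_i]|\psi\rangle|\le\sum_\alpha|v_{i\alpha}|\,|c_{i\alpha}|\le(\max_\alpha|c_{i\alpha}|)\,\|\mathbf v_i\|_1$. Hence it suffices to control $\max_\alpha|c_{i\alpha}|$, which is exactly the commutator bound established for Lemma \ref{lemma:caltech}. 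Taking $\epsilon=2^{-N/4}$ and $N$ large, that bound yields $\max_\alpha|c_{i\alpha}|\le\epsilon$, and therefore the approximate Nash condition for observable $i$, with probability at least $1-2^{-2^{N/4}}$.

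Finally I would union bound over $i=1,\ldots,M$: the probability that the condition fails for some observable is at most $M\cdot 2^{-2^{N/4}}=\mathrm{poly}(N)\cdot 2^{-2^{N/4}}$, since $M=\mathcal O(N)$. I expect this last step to be the only real obstacle, and it is a mild one: the stated probability $1-2^{-2^{N/4}}$ is retained only because the polynomial prefactor $M$ is absorbed into the doubly-exponential tail for $N$ sufficiently large. This absorption is legitimate precisely because the per-event estimate carries exponential slack---concentration of measure on the sphere $S^{2^{N+1}-1}$ actually controls $|c_{i\alpha}|$ with failure probability $\exp(-\Omega(2^{N/2}/\mathrm{poly}(N)))$, far smaller than $2^{-2^{N/4}}$---and it is the same absorption already used inside Lemma \ref{lemma:caltech} to pass from a union bound over its $P=\mathrm{poly}(N)$ operators to a clean doubly-exponential bound. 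Keeping this polynomial-versus-doubly-exponential bookkeeping honest is the one point meriting care; all genuinely analytic input is supplied by Lemma \ref{lemma:caltech}.
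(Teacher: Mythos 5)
Your proposal is correct and takes essentially the same route as the paper, which omits the argument ``for concision'' precisely because it mirrors the proof of Lemma \ref{lemma:caltech} in Ref. \onlinecite{chen2023local}: reduce to the per-generator commutator concentration bound, use linearity to dispose of the quantifier over $\pmb{v}_i$, and absorb the $\mathrm{poly}(N)$-sized union bound (over $i$ and $\alpha$, legitimate since $k$-locality forces $\dim(\mathfrak{g}_i)=\mathrm{poly}(N)$) into the doubly-exponential tail. Your bookkeeping of why the locality and $M=\mathcal{O}(N)$ hypotheses matter matches the paper's own remark following the Corollary.
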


The proof mirrors that of Lemma \ref{lemma:caltech} presented in Ref. \onlinecite{chen2023local} and we omit it for concision. Corollary \ref{cor:ubiq} thus yields a precise sense in which approximate Nash states are ubiquitous in the space of states. 

We note that $k$-locality of the Hamiltonians $\hat{h}_{i}$ and Lie group generators $\hat{A}_{i\alpha}$ together with the $\mathrm{poly}(N)$ bound on the number of observables are crucial for the validity of Corollary \ref{cor:ubiq}, because they guarantee that at most $\mathrm{poly}(N)$ Pauli strings appear in the conditions Eq. \eqref{eq:approxNashstate}. It would be interesting to consider how these results are altered when departing from the local setting, for example if the Lie group generators are allowed to act on an extensive number of qubits.

\subsection{Connection to variational quantum algorithms}

We now observe that our notion of Nash states of local Hamiltonians has a natural realization in the setting of low-depth quantum circuit approximations to eigenstates of local Hamiltonians. This is already apparent from the connection between Nash minimum states, optimal product states~\cite{Anshu_2021}, and local minima under local unitary perturbations~\cite{chen2023local} that we established in Section \ref{sec:solv_spec_cases} for strictly $k$-local Hamiltonians.

To develop this analogy further, consider the problem of approximating ground state energies of strictly $k$-local Hamiltonians on $N$ qubits. As noted above, this problem is rich enough to encode both $\textsc{NP}$-complete~\cite{karp2010reducibility} and $\textsc{QMA}$-complete~\cite{kempe2005complexity,piddock2015complexity} problems. Recall that a widely studied class of algorithms for constructing approximate ground states of such Hamiltonians consists of applying low-depth quantum circuits made up of one- and two-qubit unitary gates to an easy-to-prepare reference state $|\psi_0\rangle$. Perhaps the most prominent algorithm in this family is the quantum approximate optimization algorithm (QAOA)~\cite{farhi2014quantum}. From the viewpoint of this paper, any such variational quantum algorithm that seeks to minimize the expected energy $E(|\psi\rangle) = \langle \psi | \hat{H} | \psi \rangle$ with access to arbitrary single-qubit gates acting on $|\psi\rangle$ will generically converge to a Nash state in the sense of Definition \ref{def:Nash_state} rather than an eigenstate. (To see this, suppose that the final layer of the circuit is optimized over $N$ single-qubit gates and use Theorem \ref{thm:eigenstatethm}.) Since practical implementations of such algorithms involve finite times and a limited set of gates, they will furthermore realize Nash states approximately at best.

These simple observations lead to an appealing geometrical viewpoint on variational quantum algorithms (VQAs) for finding the ground state energy of strictly $k$-local Hamiltonians (the only case of such algorithms that we consider in this discussion). In the language of Section \ref{sec:geom_nash_variety}, such algorithms seek to approach a projectivized Nash variety $V'$ as closely as possible while simultaneously minimizing the energy $E(|\psi\rangle)$. We expect that the quantum computational complexity of approximately preparing states $|\psi\rangle \in V'$ will vary dramatically as this set is traversed, for the usual reason that quantum complexity is not smooth with respect to the standard inner-product metric on Hilbert space~\cite{brown2019complexity,susskind2020three,bulchandani2021smooth}. 

We emphasize that the worst-case computational hardness of approximating the energies of Nash states in no way contradicts the ubiquity of approximate Nash states discussed in the previous section. For the specific class of VQAs under consideration here, $V'$ coincides with the ``large barren plateau'' that was previously identified in the variational setting~\cite{chen2023local,McClean_2018} and argued to be ``easy to find'', on the grounds that the energy $E(|\psi\rangle)$ of a typical state $|\psi\rangle \in V'$ is computationally easy to approximate~\cite{chen2023local} by merely computing the trace of $\hat{H}$. In contrast, the computationally hard tasks that motivate studying these algorithms in the first place require showing that $E(|\psi\rangle) < E_0$ for a given threshold $E_0$ and for specific states $|\psi\rangle \in V'$, that will be very far from typical for the cases of computational interest. In more physical terms, typical states in $V'$ are effectively near infinite temperature while computationally useful states in $V'$ are effectively near zero temperature, and the computational easiness of approximating $V'$ merely reflects the overwhelming multiplicity~\cite{chen2023local} of states near infinite temperature.

To summarize, the above discussion reveals that VQAs for certain computationally interesting classes of Hamiltonians naturally realize the notions of pure Nash states and Nash varieties identified in this paper. It follows by Lemma \ref{lemma:NashEq} that such VQAs can be interpreted as solving for the pure-strategy Nash equilibria of a multi-player, non-cooperative quantum game. 

\subsection{Computational complexity of approximate Nash equilibria}

Finally, we turn to the computational complexity of approximating Nash equilibria of quantum games in light of Lemma \ref{lemma:NashEq}. There are results~\cite{bostanci_quantum_2022} indicating that approximating mixed-strategy Nash equilibria of quantum games is as computationally hard as the analogous problem for classical games~\cite{roughgarden2010algorithmic}. Our construction plausibly suggests a similar statement for pure-strategy Nash equilibria.

In more detail, consider a Heisenberg antiferromagnet with arbitrary (albeit at most polynomially large in $N$) two-body couplings. This is a strictly 2-local Hamiltonian and by Theorem \ref{thm:eigenstatethm} and Lemma \ref{lemma:NashEq}, ground states of this Hamiltonian define pure-strategy Nash equilibria of a corresponding multi-observable game. However, the problem of approximating the ground state energy of any such Hamiltonian to inverse-polynomial accuracy in $N$ is known to be complete for the quantum computational-complexity class QMA~\cite{piddock2015complexity}. This implies that approximating the payoff functions for all pure-strategy Nash equilibria of an arbitrary multi-observable game instance to inverse-polynomial accuracy is at least as hard as solving any problem in QMA. It seems reasonable to conjecture that if we demand inverse-exponential accuracy for the payoff functions, this problem becomes PSPACE-complete rather than QMA-complete~\cite{fefferman2016complete}, but solidifying this claim would require (at least) an inverse-exponential-precision generalization of the results of Ref.~\onlinecite{piddock2015complexity}. Such a result would nevertheless be satisfyingly consistent with the comparable hardness of finding quantum~\cite{bostanci_quantum_2022} versus classical~\cite{roughgarden2010algorithmic} mixed-strategy Nash equilibria, and with the PSPACE-completeness of finding pure-strategy Nash equilibria of certain classical games~\cite{fabrikant2004complexity}.

\section{Discussion}

A current trend in physics might be summarized as ``ask not what you can do for quantum mechanics, but what quantum mechanics can do for you''. As many-body physicists, we are moreover specifically interested in what quantum mechanics can do for us in the setting of scalable systems. This paper originated from previous work by the present authors and F.J. Burnell, as well as several others, examining cases in which there exists a quantum advantage for playing scalable cooperative games~\cite{bulchandani_multiplayer_2023,bulchandani_playing_2023,lin_quantum_2023,daniel_quantum_2021,hart_playing_2024}. That work motivated us to explore non-cooperative quantum games and their associated Nash equilibria in the simplest possible setting, namely static games in which each player independently selects a single move. As described in the introduction, the resulting Nash equilibria---classical or quantum---involve an optimization procedure that is very different from the optimization usually performed by physicists, raising the question of whether this idea might be useful in a physics context that is not already set up as a game.

That broad question aside, there are several directions stemming from our results in this paper that seem worth investigating further. These include exploring the various Nash varieties that can arise from different numbers and sets of observables, their geometry and topology, and the identification of local versus global Nash minima within them. The latter are mostly questions in the mathematics of infinite precision. Turning them into questions in quantum computer science and eventually ideas realizable on quantum devices requires more detailed thinking about approximate solution concepts for quantum games and measurement schemes for realizing them. While we have taken some initial steps in these directions, much remains to be done. Given the surprising connections between quantum games, algebraic varieties and the local Hamiltonian problem uncovered in this work, we are hopeful that any progress on this front will enjoy similarly wide-ranging applications.

\section{Acknowledgments}
We would like to thank F.J. Burnell for collaborations on related topics and for an early discussion on Nash equilibria of quantum games. V.B.B. thanks A. Deshpande, D.R. Foord, T.J. Osborne and B. Fefferman for helpful discussions and the Simons Institute for the Theory of Computing for their hospitality during part of the completion of this work.
S.L.S. was supported by a Leverhulme Trust International Professorship, Grant Number LIP-202-014. For the purpose of Open Access, the authors have applied a CC BY public copyright license to any Author Accepted Manuscript version arising from this submission. 
\bibliography{games}
\bibliographystyle{unsrt}

\appendix
\onecolumngrid
\section{Computation of Hessians for the TFIM}
\label{app:hess}
Consider the transverse-field Ising model with periodic boundary conditions for the spins,
\begin{equation} \label{eq:tfim}
    \hat{H}_\mathrm{TFIM} =- \sum_{i=1}^{N} \hat{Z}_{i}\hat{Z}_{i+1} - g \sum_{i=1}^{N} \hat{X}_{i}.
\end{equation}
This Appendix summarizes the computation of the expectation values $\langle x\rangle_\beta:=\braket{\hat{X}_i}_\beta$ and $\langle z z \rangle_\beta:=\braket{\hat{Z}_i \hat{Z}_{i+1}}_\beta$ at inverse temperature $\beta$ that appear in the Hessian Eq. \eqref{eq:isinghessian}. Via the standard~\cite{mbeng_quantum_2020} Jordan-Wigner transformation, one can map Eq. \eqref{eq:tfim} to a model of spinless free fermions
\begin{equation}
    \hat{H}_\mathrm{TFIM} = \hat{H}_+ \hat{P}_+ + \hat{H}_- \hat{P}_-,
\end{equation}
where $\hat{P}_\pm = \frac{1\pm (-1)^{\hat{N}}}{2}$ is the parity operator with $\hat{N}=\sum_j \hat{c}_j^\dagger \hat{c}_j$, and
\begin{equation}
    \hat{H}_{\pm} = -\sum_{j=1}^{N} (\hat{c}_{j}^{\dagger}\hat{c}_{j+1} + \hat{c}_{j}^{\dagger}\hat{c}_{j+1}^{\dagger}+\mathrm{h.c.}) - g\sum_{j=1}^{N}(1-2\hat{c}_{j}^{\dagger}\hat{c}_{j}), \quad \hat{c}_{L+1} = \mp \hat{c}_1.
\end{equation}
Diagonalizing these fermionic Hamiltonians requires Fourier transforming to Bogoliubov operators $\hat{\gamma}_k$, where $k \in K_{\sigma}$ labels the pseudo-momentum of each mode and the sets $K_\sigma \subset [0,\pi]$ depend on the fermion parity sector $\sigma=\pm$, see Table \ref{tab:modemomenta}.

\begin{table}[b]
\centering
\begin{tabular}{lll}
\toprule
& \textbf{$N$ even} & \textbf{$N$ odd} \\
\midrule
\textbf{Even parity subspace} & $K_+=\{\pm \frac{\pi}{N}, \pm \frac{3\pi}{N}, ..., \pm \frac{(N-1)\pi}{N}\}$ & $K_+=\{\pm \frac{\pi}{N}, \pm \frac{3\pi}{N}, ..., \pm \frac{(N-2)\pi}{N}, \pi\}$ \\
\textbf{Odd parity subspace} & $K_-=\{0, \pm \frac{2\pi}{N}, ..., \pm \frac{(N-2)\pi}{N}, \pi\}$ & $K_-=\{0, \pm \frac{2\pi}{N}, \pm \frac{4\pi}{N}, ..., \pm \frac{(N-1)\pi}{N}\}$ \\
\bottomrule
\end{tabular}
\caption{The allowed momenta for even/odd number of spins, in the fermionic even/odd parity sectors.}
\label{tab:modemomenta}
\end{table}

We can write the thermal expectation value of a generic observable $\hat{O}$ at inverse temperature $\beta$ as
\begin{equation}
    \begin{aligned}
        \mathrm{Tr}(\hat{O}e^{ -\beta \hat{H}_{\mathrm{TFIM}} }) & = \sum_{\sigma=\pm} \mathrm{Tr}(\hat{O}\hat{P}_{\sigma}e^{ -\beta \hat{H}_{\sigma} }) \\
        & = \frac{1}{2} \sum_{\sigma = \pm} \left( \mathrm{Tr}(\hat{O}e^{ -\beta \hat{H}_{\sigma} })+\eta_{\sigma} \mathrm{Tr}\left( \hat{O} e^{ -\beta \hat{H}_{\sigma } + i\pi \sum_{k\in K_{\sigma}} \hat{\gamma}_{k}^{\dagger}\hat{\gamma}_{k}} \right) \right),
    \end{aligned}
\end{equation}
where in the second line we defined $\eta_\sigma := \sigma \bra{0_{\sigma}}(-1)^{\hat{N}}\ket{0_{\sigma}}$, where $\ket{0_\sigma}$ is the Bogoliubov vacuum state of the Hamiltonian $\hat{H}_\sigma$. In the ferromagnetic phase, the Bogoliubov vacuum states for the Hamiltonians $\hat{H}_+,\,\hat{H}_-$ have even/odd fermionic parity respectively, whereas in the paramagnetic phase, the Bogoliubov vacuum states for $\hat{H}_+$ and $\hat{H}_-$ always have even fermionic parity. Therefore
\begin{equation}
    \eta_\sigma = \left\{
    \begin{array}{cc}
        1 & \text{when }g<1, \\
        \sigma & \text{when }g>1.
    \end{array}
    \right.
\end{equation}
It follows that the partition function
\begin{equation}
    \label{eq:partitionfunction}
    Z(\beta) = \mathrm{Tr}(e^{ -\beta \hat{H}_{\mathrm{TFIM}} }) = \frac{1}{2} \sum_{\sigma} e^{ \beta \sum_{k\in K_{\sigma}} \epsilon_{k} } \left( \prod_{k\in K_{\sigma}} (1+e^{ -2\beta \epsilon_{k} } ) + \eta_{\sigma} \prod_{k\in K_{\sigma}} (1-e^{ -2\beta \epsilon_{k} }) \right),
\end{equation}
where $\epsilon_k=\sqrt{1+g^2-2g\cos k}$ is the energy of the mode $\hat{\gamma}_k$. Letting $k\in K_\sigma$, we can then calculate the single-particle thermal averages as
\begin{equation}
\label{eq:thermalaveraged2pt}
    \begin{aligned}
        & \langle \hat{\gamma}_{k}^{\dagger} \hat{\gamma}_{k} \rangle =  \frac{1}{Z} \cdot \frac{e^{ -2\beta \epsilon_{k} }}{2} \left( \prod_{q\in K_{\sigma}, q\neq k} (1+e^{ -2\beta \epsilon_{q} }) - \eta_{\sigma} \prod_{q\in K_{\sigma, q\neq k} } (1-e^{ -2\beta \epsilon_{q} }) \right), \\
        & \langle\hat{\gamma}_{k} \hat{\gamma}_{k}^{\dagger}  \rangle = \frac{1}{Z} \cdot \frac{1}{2} \left(\prod_{q\in K_{\sigma}, q\neq k} (1+e^{ -2\beta \epsilon_{q} }) + \eta_{\sigma} \prod_{q\in K_{\sigma, q\neq k} } (1-e^{ -2\beta \epsilon_{q} })\right).
    \end{aligned}
\end{equation}
All other fermionic two-point functions (mismatched mode momenta, different parity sectors, or pair annihilation/creation operators) vanish. From Eq. \eqref{eq:thermalaveraged2pt}, we can thus obtain the thermal averages $\langle x \rangle_\beta$, $\langle z z \rangle_\beta$ as
\begin{equation}
    \begin{aligned}
         \langle x \rangle_{\beta}  =& -1+\frac{2}{N} \sum_{\sigma} \sum_{k\in K_{\sigma}} \left( \cos^{2} \frac{\theta_{k}}{2} \langle \hat{\gamma}_{k} \hat{\gamma}_{k}^{\dagger}\rangle + \sin^{2} \frac{\theta_{k}}{2} \langle \hat{\gamma}_{k}^{\dagger} \hat{\gamma}_{k} \rangle \right), \\
        \langle z z \rangle_{\beta} =& -\frac{1}{N} \sum_{\sigma}\sum_{k\in K_{\sigma}} e^{ -ik }\cos^{2}\frac{\theta_{k}}{2}   \langle \hat{\gamma}_{k} \hat{\gamma}_{k}^{\dagger}\rangle + e^{ik}\sin^{2} \frac{\theta_{k}}{2} \langle\hat{\gamma}_{k}^{\dagger} \hat{\gamma}_{k}\rangle + \mathrm{h.c.} \\
        &-\frac{1}{N} \sum_{\sigma} \sum_{k\in K_{\sigma}} -ie^{ -ik } \sin \frac{\theta_{k}}{2} \cos \frac{\theta_{k}}{2} \langle \hat{\gamma}_{k} \hat{\gamma}_{k}^{\dagger}\rangle-ie^{ ik } \sin \frac{\theta_{k}}{2} \cos \frac{\theta_{k}}{2} \langle\hat{\gamma}_{k}^{\dagger} \hat{\gamma}_{k}\rangle + \mathrm{h.c.}
    \end{aligned}
\end{equation}
where $\theta_k$ are the ``Bogoliubov angles'' for diagonalizing the fermionic Hamiltonians, given by~\cite{mbeng_quantum_2020}
\begin{equation}
    \sin \theta_k = \frac{\sin k}{\epsilon_k}, \quad \cos \theta_k  = \frac{g-\cos k}{\epsilon_k}.
\end{equation}

\section{TFIM ground states are Nash minimum states}\label{app:GSareNE}

In this Appendix, we prove the claim in Section \ref{sec:temperature} that ground states of the TFIM $\hat{H} = \sum_{i=1}^N \hat{h}_i$ with $\hat{h}_i$ as in Eq. \eqref{eq:local_decomp} are Nash minimum states of $\{\hat{h}_i\}_{i=1}^{N}$ with respect to single-qubit rotations $G_i \cong SU(2)$. Using the notation of Appendix \ref{app:hess}, we can write this ground state explicitly as 
\begin{equation}
|\mathrm{GS}\rangle=\prod_{0<k<\pi}\left(\cos \left(\frac{\theta_k}{2}\right)-\sin \left(\frac{\theta_k}{2}\right) \hat{c}_k^{\dagger} \hat{c}_{-k}^{\dagger}\right)|0\rangle.
\end{equation}
We will parameterize the unitary transformations $\hat{U}_{i} \in G_{i}$ as
\begin{equation}
\hat{U}_{i} =  \cos{\left(\frac{\phi_i}{2}\right)}\hat{\mathbbm{1}} + \mathrm{i} \sin{\left(\frac{\phi_i}{2}\right)}\mathbf{n}_{i}\cdot \hat{\pmb{\sigma}}_i,
\label{eq:SU2parameterization}
\end{equation}
where $\mathbf{n}_{i}=(n_{i}^{x}, n_{i}^{y}, n_{i}^{z})$ are unit vectors in $\mathbb{R}^{3}$, $\phi_{i} \in [0,\pi]$ and $\hat{\pmb{\sigma}}_i = (\hat{\sigma}^x_i,\hat{\sigma}^y_i,\hat{\sigma}^z_i)$. It suffices to show that
\begin{equation}
\bra{\mathrm{GS}} \hat{h}_{i} \ket{\mathrm{GS}} \leq \bra{\mathrm{GS}} \hat{U}_{i}^{\dagger} \hat{h}_{i} \hat{U}_{i}\ket{\mathrm{GS}}
\end{equation}
for all $\hat{U}_i \in G_i$. Expanding with the parameterization in Eq. \eqref{eq:SU2parameterization}, we have
\begin{equation}
\begin{aligned}
     \bra{\mathrm{GS}} \hat{U}_{i}^{\dagger} \hat{h}_{i} \hat{U}_{i}\ket{\mathrm{GS}}
 = &\cos^2{\left(\frac{\phi_i}{2}\right)} \bra{\mathrm{GS}} \hat{h}_i \ket{\mathrm{GS}} + \mathrm{i} \cos {\left(\frac{\phi_i}{2}\right)} \sin {\left(\frac{\phi_i}{2}\right)}  \bra{\mathrm{GS}} [\hat{h}_i, \mathbf{n}_i\cdot \hat{\pmb{\sigma}}_i] \ket{\mathrm{GS}} \\ &\quad + \sin^2{\left(\frac{\phi_i}{2}\right)} \bra{\mathrm{GS}} (\mathbf{n}_i\cdot \hat{\pmb{\sigma}}_i) \hat{h}_i(\mathbf{n}_i\cdot \hat{\pmb{\sigma}}_i) \ket{\mathrm{GS}}.
\end{aligned}
\label{eq:IsingPayoffExpansion}
\end{equation}
In the second term, the commutators can be computed as
\begin{equation}
\begin{aligned}
        &[\hat{h}_i,\hat{X}_i] = -\mathrm{i} \hat{Y}_i \left(\hat{Z}_{i-1}+\hat{Z}_{i+1}\right),  \\
        &[\hat{h}_i,\hat{Y}_i] = -\mathrm{i}2g\hat Z_i+\mathrm{i}\hat{X}_i\left(\hat{Z}_{i-1}+\hat{Z}_{i+1}\right), \\
        &[\hat{h}_i,\hat{Z}_i] = \mathrm{i}2g\hat Y_i.
\end{aligned}\label{eq:singlequbitcommutators}
\end{equation}
All these commutators create or destroy unpaired Jordan-Wigner fermions and their ground-state expectation values therefore vanish. For the last term in Eq. \eqref{eq:IsingPayoffExpansion}, we expand $\mathbf{n}\cdot \hat{\pmb{\sigma}}$ and eliminate terms that change the overall fermionic parity. This yields
\begin{equation}
\begin{aligned}
\bra{\mathrm{GS}} (\mathbf{n}_i\cdot \hat{\pmb{\sigma}}_i) \hat h_i (\mathbf{n}_i\cdot \hat{\pmb{\sigma}}_i) \ket{\mathrm{GS}} &= (n_i^x)^2\braket{\hat X_i \hat h_i \hat X_i} + (n_i^y)^2\braket{\hat Y_i \hat h_i \hat Y_i} + (n_i^z)^2\braket{\hat Z_i \hat h_i \hat Z_i} + n_i^yn_i^z \braket{\hat Y_i \hat h_i \hat Z_i+\hat Z_i \hat h_i \hat Y_i} \\
& = \langle \hat{Z}_i \hat{Z}_{i+1} \rangle \left((n_i^x)^2+(n_i^y)^2-(n_i^z)^2\right) + g \langle \hat{X}_i\rangle \left(-(n_i^x)^2+(n_i^y)^2+(n_i^z)^2\right) \\ & \geq -\langle  \hat Z_i \hat Z_{i+1} + g\hat X_i \rangle.
\end{aligned}\label{eq:NonVanishingExpansionTerms}
\end{equation}
where we used translation invariance and fermion parity in going to the second line and non-negativity of the ground-state expectation values $\langle \hat{X}_{i} \rangle$ and $\langle \hat{Z}_{i} \hat{Z}_{i+1} \rangle$ for $g>0$ and the unit norm of $\mathbf{n}_i$ in going to the last line. Substituting Eq. \eqref{eq:NonVanishingExpansionTerms} into Eq. \eqref{eq:IsingPayoffExpansion} gives
\begin{equation}
 \begin{aligned}
\bra{\mathrm{GS}} U_{i}^{\dagger} \hat{h}_{i} \hat{U}_{i}\ket{\mathrm{GS}} &\geq - \cos^2{\left(\frac{\phi_i}{2}\right)} \bra{\mathrm{GS}} \hat Z_i \hat Z_{i+1} + g\hat X_i \ket{\mathrm{GS}} - \sin^2{\left(\frac{\phi_i}{2}\right)} \bra{\mathrm{GS}} \hat Z_i \hat Z_{i+1} + g\hat X_i \ket{\mathrm{GS}} \\
& = \bra{\mathrm{GS}} \hat{h}_{i} \ket{\mathrm{GS}}. 
\end{aligned}
\end{equation}
Thus we have shown that the TFIM ground state $\ket{\mathrm{GS}}$ is a Nash minimum state in the sense of Definition \ref{def:Nash_eqstate}, with respect to the operators $\{\hat{h}_{i}\}_{i=1}^N$ in Eq. \eqref{eq:local_decomp} and single-qubit rotations $G_i \cong SU(2)$.

\section{Nash equilibria of the Quantum Prisoner's Dilemma}

In this Appendix, we present some technical details underpinning the results of Section \ref{sec:case_study}. 

\subsection{``Rebits'' are enough} \label{app:rebits}
We first explain why examining ``rebits'', as in Eq. \eqref{eq:rebits}, is sufficient to capture the full structure of Nash states for the Quantum Prisoner's Dilemma. Let
\begin{equation}
    R_{\mathrm{QPD}} := \{\ket{\psi_0} \in \mathbb{R}^4 \subseteq \mathbb{C}^4| \bra{\psi_0} \mathrm{i}[\hat{h}_i, \hat{\sigma}_{i\alpha}]\ket{\psi_0} = 0, \ i=1,2, \ \alpha = 1,2,3 \}
\end{equation}
denote the variety of rebit Nash state vectors.  Given a state vector $\ket{\psi_0} \in R_\mathrm{QPD}$, one can immediately generate a three-torus of Nash state vectors $|\psi\rangle \in \mathbb{C}^4$ with constant payoff functions $\langle \psi | \hat{h}_i | \psi \rangle$ by applying the product of single-qubit gates $e^{\mathrm{i}(\alpha_0 + \alpha_1 \hat{Z}_1+\alpha_2 \hat{Z}_2)}$, with $\alpha_i\in [0,2\pi)$. Such operations transform $|\psi_0\rangle$ as
\begin{equation}
    e^{\mathrm{i}(\alpha_0+\alpha_1 \hat{Z}_1+\alpha_2 \hat{Z}_2)} \ket{\psi_0} = e^{\mathrm{i}(\alpha_0+\alpha_1+\alpha_2)}\left( X_0\ket{00} + X_1 e^{-2\mathrm{i}\alpha_2}\ket{01} + X_2 e^{-2\mathrm{i}\alpha_1}\ket{10} + X_3 e^{-2\mathrm{i}(\alpha_1+\alpha_2)}\ket{11}\right).
\end{equation}
We argue that the full Nash variety for the Quantum Prisoner's Dilemma is given by the product of $R_{\mathrm{QPD}}$ with a three-torus, namely
\begin{equation}
\label{eq:fullpara}
    V_{\mathrm{QPD}} := \{\ket{\psi} \in \mathbb{C}^4 | \ket{\psi} = e^{\mathrm{i}(\alpha_0+\alpha_1 \hat{Z}_1 +\alpha_2 \hat{Z}_2)} \ket{\psi_0},\ \ket{\psi_0} \in R_\mathrm{QPD}, \ \alpha_0,\alpha_1,\alpha_2 \in [0,2\pi) \}.
\end{equation}
It can be deduced from the fact that $[\hat{h}_i,\hat{Z}_i]=0$ that $V_\mathrm{QPD}$ is contained within the Nash variety. Now consider an arbitrary Nash state vector $\ket{\psi} \in \mathbb{C}^4$, which can be written as
\begin{equation}
    \ket{\psi} = X_0e^{i\phi_0}\ket{00} +  X_1 e^{i\phi_1} \ket{01} + X_2 e^{\mathrm{i}\phi_2} \ket{10} + X_3 e^{\mathrm{i} \phi_3}\ket{11},
\end{equation}
where $(X_0,X_1,X_2,X_3)\in \mathbb{R}^4$ and $\phi_0,\phi_1,\phi_2,\phi_3 \in [0,2\pi)$. Then acting with $e^{\mathrm{i}(\alpha_0+\alpha_1 \hat{Z}_1 +\alpha_2 \hat{Z}_2)}$ with $\alpha_0 = -(\phi_1+\phi_2)/2, \, \alpha_1 = (\phi_2-\phi_0)/2, \,  \alpha_2 = (\phi_1-\phi_0)/2$ yields another Nash state vector
\begin{equation}
    \ket{\psi'} = X_0\ket{00} + X_1 \ket{01} + X_2 \ket{10} + X_3 e^{\mathrm{i}\phi_3'} \ket{11}
\end{equation}
with $\phi_3' = \phi_0 + \phi_3-\phi_1 - \phi_2$. We define new real variables $X_4,\,X_5$ by $X_3e^{\mathrm{i}\phi_3'} = X_4 + \mathrm{i} X_5$ and write down the Nash state conditions explicitly:
\begin{equation} \label{eq:seteqs}
\left\{
    \begin{aligned}
        & 2 X_0 X_2 + X_1 X_4 = 0, \\
        & 2 X_0 X_1 +  X_2 X_4 = 0, \\
        & X_1 X_5 = 0, \\
        & X_2 X_5 = 0.
    \end{aligned}\right.
\end{equation}
Solutions to Eq. \eqref{eq:seteqs} satisfy either $X_5=0$ or $X_1=X_2=0$. The former case immediately yields $\ket{\psi'} \in R_{\mathrm{QPD}}$, whereas the latter case satisfies $e^{\mathrm{i}(-\phi_3'/2 + (\phi_3'/2)\hat{Z}_1)}\ket{\psi'} \in R_{\mathrm{QPD}}$. We deduce that $V_{\mathrm{QPD}}$ indeed captures the entire Nash variety. It follows that consideration of $R_{\mathrm{QPD}}$ alone is sufficient to capture all possible values of the payoff functions for Nash states of the Quantum Prisoner's Dilemma, and that all points of $V_{\mathrm{QPD}}$ are simply related to points of $R_{\mathrm{QPD}}$ via the torus action in Eq. \eqref{eq:fullpara}.

To summarize, up to assigning such tori to individual points of $R_{\mathrm{QPD}}$, the visualizations in Fig. \ref{fig:nvqpd} and Fig. \ref{fig:intersections} encapsulate the full set of Nash states for the Quantum Prisoner's Dilemma.

\subsection{The Nash equilibrium condition} \label{app:NEcon}
We now show that if $|\psi\rangle$ is a Nash state for the Quantum Prisoner's Dilemma with respect to single-qubit rotations, then it is additionally a Nash maximum state iff Eq. \eqref{eq:Nashmax4qpd} holds.  

We start by writing the payoff Hamiltonians Eq. \eqref{eq:qpdpayoffs} in terms of Pauli matrices
\begin{equation} \label{eq:qpdpayoff_rewrite}
    \begin{aligned}
\hat h_1 &= \frac{9}{4} \hat{\mathbbm{1}} \otimes \hat{\mathbbm{1}}   + \frac{7}{4} \hat{\mathbbm{1}}  \otimes \hat Z   - \frac{3}{4} \hat Z \otimes \hat{\mathbbm{1}}  -\frac{1}{4} \hat Z \otimes \hat Z,\\
    \hat h_2 &= \frac{9}{4} \hat{\mathbbm{1}}  \otimes \hat{\mathbbm{1}}  - \frac{3}{4} \hat{\mathbbm{1}}  \otimes \hat Z + \frac{7}{4} \hat Z \otimes \hat{\mathbbm{1}}  -\frac{1}{4} \hat Z\otimes \hat Z.
\end{aligned}
\end{equation}

For $\ket{\psi}$ to be a Nash maximum state, we require $\bra{\psi}U_{i}^{\dagger} \hat{h}_{i} U_{i}-\hat{h}_{i}\ket{\psi} \leq 0$ for $i=1,2$ and for all $U_{i}\in SU(2)$. Using the same parameterization of single-qubit rotations as in Eq. \eqref{eq:SU2parameterization} and the Nash state property of $|\psi\rangle$, we have
\begin{equation}
\bra{\psi}U_{i}^{\dagger} \hat{h}_{i} U_{i}-\hat{h}_{i}\ket{\psi} = \sin^2{\left(\frac{\phi_i}{2}\right)} \left[\bra{\psi} (\mathbf{n}_{i}\cdot \hat{\pmb{\sigma}}_i)\hat{h}_{i} (\mathbf{n}_{i}\cdot \hat{\pmb{\sigma}}_i) - \hat{h}_{i}\ket{\psi}\right], \quad i=1,2,
\end{equation}
and it follows that $\ket{\psi}$ is Nash maximal iff
\begin{equation}
\label{eq:NEcon_rewrite1}
\bra{\psi} (\mathbf{n}_{i}\cdot \hat{\pmb{\sigma}}_i) \hat{h}_{i} (\mathbf{n}_{i}\cdot \hat{\pmb{\sigma}}_i) \ket{\psi} \leq \bra{\psi} \hat{h}_{i}\ket{\psi}, \quad i=1,2.
\end{equation}

Elementary manipulation of Pauli matrices now yields the following simplifications:
\begin{enumerate}
\item  $\hat{X}_i\hat{h}_i\hat{X}_i=\hat{Y}_i\hat{h}_i\hat{Y}_i$ and  $\hat{Z}_i\hat{h}_i\hat{Z}_i=\hat{h}_i$,
    \item $\hat{X}_i\hat{h}_i\hat{Y}_i+\hat{Y}_i\hat{h}_i\hat{X}_i=0$,
    \item $\hat{Z}_i\hat{h}_i\hat{Y}_i+\hat{Y}_i\hat{h}_i\hat{Z}_i = -\mathrm{i}[\hat{h}_i,\hat{X}_i]$ and $\hat{X}_i\hat{h}_i\hat{Z}_i+\hat{Z}_i\hat{h}_i\hat{X}_i=\mathrm{i}[\hat{h}_i,\hat{Y}_i]$, which have vanishing expectation values in the state $|\psi\rangle$ by its Nash state property.
\end{enumerate}
Substituting the above facts into Eq. \eqref{eq:NEcon_rewrite1} yields necessary and sufficient conditions
\begin{equation}
    \label{eq:NEcons_final}
    \langle \psi | \hat{X}_{i}\hat{h}_{i}\hat{X}_{i} | \psi \rangle \leq\langle \psi | \hat{h}_{i} | \psi \rangle, \quad i=1,2,
\end{equation}
for a given Nash state $|\psi\rangle$ to be Nash maximal. Parameterizing the (rebit) state $\ket{\psi}$ explicitly as $\ket{\psi} =X_{0}\ket{00} +X_{1}\ket{01} +X_{2}\ket{10} + X_{3} \ket{11} $, Eq. \eqref{eq:NEcons_final} becomes
\begin{equation}
    \label{eq:NEcons_final_S3}
    2(X_{0}^{2}-X_{2}^{2})+(X_{1}^{2}-X_{3}^{2}) \leq 0, \quad 2(X_{0}^{2}-X_{1}^{2}) + (X_{2}^{2}-X_{3}^{2}) \leq 0,
\end{equation}
recovering Eq. \eqref{eq:Nashmax4qpd}. In terms of the stereographically projected coordinates Eq. \eqref{eq:stereoprojection}, these conditions read
\begin{equation}
    \left\{
    \begin{aligned}
        & (1-x^{2}-y^{2}-z^{2})^{2}+2x^{2}-4y^{2}-2z^{2} \leq 0, \\
        & (1-x^{2}-y^{2}-z^{2})^{2}-4x^{2}+2y^{2}-2z^{2} \leq 0,
    \end{aligned}
    \right.
\end{equation}
as depicted in Fig. \ref{fig:nvqpd}.

\subsection{Orbits of two rebits under the action of $SU(2)\times SU(2)$} \label{app:orbits}

Finally, we discuss the stratification of two-qubit Hilbert space $\mathcal{H} \cong \mathbb{C}^4$ into orbits of the $SU(2)\times SU(2)$ action induced by single-qubit rotations. As in Section \ref{sec:case_study}, we restrict attention to unit-normalized rebit states $|\psi\rangle = X_0 \ket{00} + X_1 \ket{01} + X_2 \ket{10} + X_3 \ket{11}$, which greatly simplifies the problem and allows for an explicit visualization as in Fig. \ref{fig:intersections}.  (For the full pure-state space $\cp{3}$, a comprehensive discussion is given in Ch. 16 of Ref. \onlinecite{bengtsson_geometry_2006}.)

For such states, the bipartite entanglement entropy is determined by the eigenvalues of the one-qubit reduced density matrix (RDM),
\begin{equation}
    \hat{\rho}_{1} = \begin{pmatrix}
X_{0}^{2}+X_{1}^{2}  &  X_{0}X_{2}+X_{1}X_{3} \\ 
X_{0}X_{2}+X_{1}X_{3}  &  X_{2}^{2}+X_{3}^{2}
\end{pmatrix}.
\end{equation}
Its eigenvalues are
\begin{equation}
    \lambda_{\pm} = \frac{1}{2} \pm \frac{1}{2}\sqrt{ 1-4(X_{0}X_{3}-X_{1}X_{2})^{2} }.
\end{equation}
We define the entanglement parameter $\chi^2:=(X_{0}X_{3}-X_{1}X_{2})^{2} \in \left[ 0, \frac{1}{4} \right]$. The state $\ket{\psi}$ is separable if $\chi^2=0$ and maximally entangled if $\chi^2=1/4$.
\subsubsection{The separable orbit}

From the above discussion, the criterion for separability of $\ket{\psi}$ is $X_{0}X_{3}-X_{1}X_{2}=0$. In stereographically projected coordinates, one has
\begin{equation}
    \label{eq:separableorbit}
    (1-x^{2}-y^{2}-z^{2})z-2xy = 0.
\end{equation}
As illustrated in Fig. \ref{fig:subfig1}, the stereographically projected separable orbit is homeomorphic to the complement of a pair of linked circles in $\R{3}$. It intersects the $z$-axis at $\ket{00}$ and $\ket{11}$.

\subsubsection{Maximally entangled orbits}

Maximally entangled and normalized state vectors with real coefficients form a one-dimensional subset of $S^3$. A necessary and sufficient condition for maximal entanglement is that the matrix
\begin{equation}
    \sqrt{2}\begin{pmatrix}
X_{0} & X_{1} \\
X_{2} &  X_{3} 
\end{pmatrix}
\end{equation}
is unitary, i.e. $X_{0}=X_{3},X_{1}=-X_{2}$, or $X_{0}=-X_{3},X_{1}=X_{2}$. In projected coordinates, these constraints read
\begin{equation}
    x=-y, 1-x^2-y^2-z^2 = 2z; \text{  or  } x = y, 1-x^2-y^2-z^2 = -2z,
\end{equation}
which can be written as two disjoint, linked circles:
\begin{equation}
    \left\{
\begin{aligned}
& x=-y, \, \, \,  (z-1)^{2} + x^{2} + y^{2} = 2, \\
&  x=y, \, \, \,  (z+1)^{2} + x^{2} + y^{2} = 2.
\end{aligned}
\right .
\end{equation}
As illustrated in Fig. \ref{fig:subfig3}, these orbits intersect the $z$-axis at $\ket{\mathrm{GHZ}^\pm} = \frac{1}{\sqrt{2}}(\ket{00} \pm \ket{11})$, and the circle $x^2+y^2=1,z=0$ at the other two Bell states $\frac{1}{\sqrt{2}} (\ket{01} \pm \ket{10})$.

\subsubsection{Orbits with intermediate entanglement}

 In stereographically projected coordinates, the defining equations for ``generic'' orbits with entanglement parameter $0<\chi^2<1/4$ are 
\begin{equation}
    (1-x^{2}-y^{2}-z^{2})z-2xy \pm \chi (1+x^{2}+y^{2}+z^{2})^{2}= 0.
\end{equation}
As illustrated in Fig \ref{fig:subfig2}, these orbits appear to comprise two linked tori symmetric under inversion with respect to the origin. This shape interpolates between the separable and the maximally entangled orbits. For example, the separable orbit can be seen as a limiting case in which the tori expand until their boundaries are touching one another.

\end{document}